\documentclass[10pt,twocolumn,twoside]{IEEEtran}
%

\usepackage[left=0.75in,right=0.75in,top=0.73in,bottom=.76in]{geometry}
\usepackage[utf8]{inputenc}
\usepackage[T1]{fontenc}
\usepackage{url}
\usepackage{ifthen}
\usepackage{multicol}
\usepackage{cite}

\usepackage{graphicx,epstopdf,amssymb,amsthm}
\usepackage{ulem,color}
\usepackage{epstopdf}
\usepackage[utf8]{inputenc}
\usepackage[english]{babel}
\usepackage{caption}
\usepackage{algorithm}
\usepackage{algpseudocode}
\usepackage{amsmath,amsfonts,amssymb,amsthm}
\usepackage{booktabs}
\usepackage{lipsum}


\pagenumbering{gobble} 

\usepackage[dvipsnames]{xcolor}

\newtheorem{theorem}{Theorem}
\newtheorem{problem}{Question}
\newtheorem{lemma}{Lemma}

\newtheorem{corollary}{Corollary}

\newtheorem{remark}{Remark}

\newcommand{\diag}{\text{diag}}

\newtheorem{definition}{Definition}


%

%
\usepackage{cite}
\usepackage[T1]{fontenc}
\usepackage{color}
\usepackage{amsmath}
\usepackage{amsthm}
\usepackage{amssymb}
\usepackage{stmaryrd}
\usepackage{graphicx}
\usepackage{esint}

\makeatletter





\begin{document}


\title{\vspace{0.25in}Distributed Reproduction Numbers of \\ Networked Epidemics}

\author{
Baike She, 
Philip E. Par\'{e}, 
and 
Matthew Hale*
\thanks{*Baike She and Matthew Hale are with the Department of Mechanical and Aerospace Engineering at University of Florida.
Their work was supported by DARPA under award no. HR00112220038; Philip E. Par\'{e} is with the
Elmore Family School of Electrical and Computer Engineering at Purdue University. His work was supported in part 
   by the National Science Foundation, grant NSF-ECCS \#2032258.
E-mails: \{shebaike, matthewhale\}@ufl.edu; philpare@purdue.edu.
}

}

\maketitle


\begin{abstract}
Reproduction numbers are widely used for the estimation and prediction of epidemic spreading processes over networks. 
However, reproduction numbers do not enable estimation and prediction in individual communities within networks,
and they can be difficult to compute due to the aggregation of infection data that is required to do so. 
Therefore, in this work we propose a novel concept of \textit{distributed reproduction numbers} to capture the spreading behaviors of each entity in the network, and we show how to compute
them using certain parameters in networked $SIS$ and $SIR$ epidemic models. 
We use distributed reproduction numbers to derive new conditions under which an outbreak can occur. 
These conditions are then used to derive new conditions for the existence, uniqueness, and stability of equilibrium states. 
Finally, in simulation we use synthetic infection data to illustrate 
how distributed reproduction numbers provide more fine-grained analyses of networked spreading processes than ordinary reproduction numbers.
\end{abstract}

\vspace{-3ex}
\section{Introduction}
\label{intro}
Reproduction numbers are some of the most critical quantities in infectious disease epidemiology \cite{van2017reproduction}.
Reproduction numbers are often among the quantities most urgently estimated for
emerging infectious diseases in outbreak situations, since it is easy for policy-makers to explain the reproduction numbers to the general public,
and their values can also be used to design control
interventions during an established pandemic \cite{soltesz2020effect}. There are two types of most commonly used reproduction numbers. \textit{Basic reproduction numbers} describe the number of secondary infected cases generated by one infected case in the full susceptible population. Meanwhile, \textit{effective reproduction numbers} capture the number of secondary infected cases generated by one infected case in a mixed susceptible and infected population \cite{van2017reproduction}. The critical value of a reproduction number is~$1$, since epidemic spreading processes will exhibit different spreading behaviors if the reproduction number 
is less than or greater than~$1$. 

In control theory,  researchers have used reproduction numbers to model, analyze, and design control strategies for epidemic mitigation problems \cite{she2021network,pascal2022nonsmooth, smith2021convex,casella2020can}.
Recent work has constructed threshold conditions at~$1$ to 
analyze the transient and steady-state behaviors of spreading processes 
based on basic and effective reproduction numbers being less than or greater than~$1$
\cite{mei2017epidemics_review,pare2020modeling_review,zino2021analysis,nowzari2016epidemics}. 
Researchers have also extended the idea of constructing threshold conditions at~$1$ 
from the classic networked $SIS$ models \cite{van2011n} to some recent novel networked spreading models, e.g., networked bi-virus models \cite{bivirus} and coupled networked spreading models \cite{she2021network}. These threshold conditions are typically in terms of the reproduction numbers of the overall network spreading processes. 

However, networked compartmental models often exhibit high heterogeneity in both spreading parameters and local network structures. 
Hence, it can be challenging to use the reproduction numbers of the overall network to characterize the 
spreading behaviors of individual entities within the network. 
For instance, the spreading behavior of COVID-19 in different regions was different across the United States, in terms of infection growth, peak infection date, etc.~\cite{ihme2021modeling}. 

In addition, different regions have different ways of collecting and representing infection data. Along with privacy issues, it is arduous to 
aggregate data from many communities to construct network-level reproduction numbers.
Therefore, while existing network-level reproduction numbers capture the overall epidemic spread in a network, we 
require new reproduction numbers for each 
entity in a network that can be locally computed 
and characterize local spreading. 

To achieve this goal, we first introduce the 
classic networked $SIS$ and $SIR$ models and their reproduction numbers, which we refer to
as ``network-level reproduction numbers.'' 
Then, we propose a group of novel distributed reproduction numbers to
capture the spreading behavior of individual communities. We develop threshold conditions 
as the function of these
distributed reproduction numbers
to study the classic networked $SIS$ and $SIR$ models. In simulation, we illustrate that the distributed reproduction numbers can be estimated locally via synthetic data. 
This simulation also shows that distributed reproduction numbers can be used to infer spreading trends within and between communities, which network-level reproduction numbers cannot do.

To summarize, our contributions are: 
 \begin{itemize}
   \item We introduce a new group of distributed reproduction numbers. Unlike the network-level reproduction number, 
   we illustrate the distributed reproduction numbers can capture spreading behaviors of individual entities within the network (e.g., individual communities);
    \item We show that not only can 
    distributed reproduction numbers capture epidemic spread within individual entities, but 
    can
    also 
    facilitate the
    analysis of the transient and steady-state behaviors of overall networked spreading;
    \item We leverage synthetic infection data to estimate distributed reproduction numbers, and 
    we illustrate that distributed reproduction numbers capture more detailed spreading properties within and between the entities in a network than
    network-level reproduction numbers.
\end{itemize}
The rest of the paper is organized as follows. We introduce the background and problem statements in Section~II. 
In Section~III, we propose and study distributed reproduction numbers. 
In Section~IV, we illustrate the effectiveness of the distributed reproduction numbers using synthetic infection data. 
Then Section~V concludes. 
\vspace{-3ex}
\subsection*{Notation}
Let $\underline{n}$ denote the index set $\{1,2,3,\dots,n\}$. 
For a matrix $\mathcal{A}\in\mathbb{R}^{n\times n}$, we use $[\mathcal{A}]_{ij}$ to denote the $ij^{th}$ entry of $\mathcal{A}$. 
We use $\rho(\mathcal{A})$ to represent the spectral radius of the matrix $\mathcal{A}$. For a vector $x\in \mathbb{R}^n$, we use $\diag(x)\in\mathbb{R}^{n\times n} $ to denote the diagonal matrix with the $i$th diagonal entry being $x_i$ for all $i \in\underline{n}$. For two vectors $x, y\in\mathbb{R}^n$, we use $x>y$ to denote that there exists at least one $i\in\underline{n}$ such that $x_i>y_i$. Denote $\boldsymbol{0}$ and $\boldsymbol{1}$ as the zero vector and one vector with the corresponding dimension given by context. 
Let $[a,b]^n$ denote a closed cube and $(a,b)^n$ denote an open cube,  for any $a,b\in \mathbb{R}$.
\vspace{-2ex}
\section{Problem Formulation}\label{section2}
This section provides background on two existing networked spreading models and the standard
threshold conditions for their network-level reproduction numbers.  
Then we formulate the problems that are the focus of this work.
\vspace{-3ex}
\subsection{Background: $SIR$ and $SIS$ Models}
The networked $SIS$ and $SIR$ models are popular in modeling and analyzing  epidemic spreading processes \cite{mei2017epidemics_review,pare2020modeling_review,zino2021analysis,nowzari2016epidemics}, and we 
study them throughout the paper. 
We consider epidemic processes on strongly connected graphs of~$n$ communities. For all $i\in \underline{n}$, 
let $s_i$, $x_i$, and~$r_i$ represent the susceptible, infected, and recovered proportions of the population of community $i$, respectively. We use $s_i(t)$ and $s_i$ interchangeably.
The classic networked $SIS$ model is 
\begin{subequations}\label{Eq: SIS}
\begin{alignat}{2}
\frac{ds_i}{dt} &= -\sum_{j\in \underline{n}}s_i\beta_{ij}x_j+\gamma_{i}x_i, \label{eq:SIS_S}\\
\frac{dx_i}{dt} &= \sum_{j\in \underline{n}}s_i\beta_{ij}x_j-\gamma_{i}x_i,
\label{eq:SIS_I}
\end{alignat}
\label{eq:SIS}
\end{subequations}
and the classic networked $SIR$ model is 
\begin{subequations}\label{Eq:SIR}
\begin{alignat}{2}
\frac{ds_i}{dt} &= -\sum_{j\in \underline{n}}s_i\beta_{ij}x_j, \label{eq:SIR_S}\\
\frac{dx_i}{dt} &= \sum_{j\in \underline{n}}s_i\beta_{ij}x_j-\gamma_{i}x_i,\label{eq:SIR_I}\\
\frac{dr_i}{dt} &= \gamma_{i}x_i.\label{eq:SIR_R}
\end{alignat}
\label{eq:SIR}
\end{subequations}
In these models, $\beta_{ij}\geq 0$ denotes the transmission rate from community $j$ to community $i$ for all $i,j\in\underline{n}$, and ${\gamma_{i}>0}$ denotes the recovery rate of community $i$ for all $i\in\underline{n}$. Further, we define $\mathcal{B}\in \mathbb{R}^{n\times n}_{\geq0}$ such that $[\mathcal{B}]_{ij}=\beta_{ij}$ for all $i,j \in \underline{n}$ as the \textit{transmission matrix}. We define the diagonal matrix $\mathcal{D}\in \mathbb{R}^{n\times n}_{\geq0}$ such that $[\mathcal{D}]_{ii}=\gamma_i>0$ for all $i \in \underline{n}$ as the \textit{recovery matrix}. 
\begin{definition}
\label{def:Qqui}
At any healthy (disease-free) equilibrium, $x^*=\boldsymbol{0}$. At any endemic equilibrium, $x^*\in (0,1)^n$.
\end{definition}
Note that at an endemic equilibrium one cannot have $x^*_i=0$ or $x_i^*=1$ for any~$i \in \underline{n}$ \cite{mei2017epidemics_review}.
%
Inspired by the use of reproduction numbers to construct threshold conditions for the analysis of non-networked epidemic models, 
researchers have studied the spreading behaviors of the networked $SIS$ and $SIR$ models in \eqref{eq:SIS} and \eqref{eq:SIR} using thresholds conditions \cite{mei2017epidemics_review,pare2020modeling_review,zino2021analysis,nowzari2016epidemics}. 
These threshold conditions are defined in terms of the \textit{reproduction numbers of networks} (namely the network-level reproduction numbers).
\begin{definition}(Reproduction Numbers of Networks)
\label{Def:Repro}
Define the basic reproduction numbers of the networked $SIS$ and $SIR$ models as $R^0= \rho ({\mathcal{D}^{-1}\mathcal{B}})$. Define the effective reproduction numbers of the networked $SIS$ and $SIR$ models  as $R^t = \rho ({\diag(s) \mathcal{D}^{-1}\mathcal{B}})$.
\end{definition}
Further, we can leverage both reproduction numbers of networks in Definition~\ref{Def:Repro}
to characterize the spreading behaviors of the network, illustrated by the following two lemmas.
\begin{lemma}\cite[Thm. 4.2 , 4.3]{mei2017epidemics_review}
\label{lem:SIS}   
The networked $SIS$ model in \eqref{eq:SIS} has a unique equilibrium which is the globally asymptotically stable healthy equilibrium if and only if $R^0\leq 1$. The $SIS$ model has a unique endemic equilibrium that is globally asymptotically stable if and only if $R^0 >1$. 
\end{lemma}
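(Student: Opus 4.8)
This result is classical (it is attributed here to \cite{mei2017epidemics_review}); the route I would take is the following. \textbf{Step 1: reduction and structure.} Since $s_i+x_i$ is conserved along \eqref{eq:SIS}, I would eliminate $s_i=1-x_i$ and study the single vector field $f(x)=\big(\diag(\mathbf{1}-x)\mathcal{B}-\mathcal{D}\big)x=(\mathcal{B}-\mathcal{D})x-\diag(x)\mathcal{B}x$ on the unit cube $[0,1]^n$. A tangency check on the faces of the cube shows $[0,1]^n$ is positively invariant, so the model is well posed. The Jacobian $Df(x)$ is Metzler on $[0,1]^n$, and it is irreducible because the contact graph is strongly connected; hence the induced flow is strongly monotone, and this is the structural property that powers the rest of the proof.

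\textbf{Step 2: healthy equilibrium and the threshold at $1$.} The origin is always an equilibrium, with Jacobian the Metzler matrix $\mathcal{B}-\mathcal{D}$. Invoking the standard spectral equivalence for Metzler matrices, $\mathcal{B}-\mathcal{D}$ is Hurwitz iff $R^0=\rho(\mathcal{D}^{-1}\mathcal{B})<1$, and has a simple zero eigenvalue with all other eigenvalues in the open left half plane iff $R^0=1$. For $R^0<1$ I would upgrade local to global stability by a comparison argument: since $\diag(x)\mathcal{B}x\ge\mathbf{0}$ on the cube, $\dot x\le(\mathcal{B}-\mathcal{D})x$ componentwise, so $\mathbf{0}\le x(t)\le e^{(\mathcal{B}-\mathcal{D})t}x(0)\to\mathbf{0}$. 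The delicate case is $R^0=1$, where linearization is inconclusive; here I would let $w\gg\mathbf{0}$ be the left Perron eigenvector of $\mathcal{B}-\mathcal{D}$ for the eigenvalue $0$ and use $V(x)=w^\top x$, computing $\dot V=-w^\top\diag(x)\mathcal{B}x\le0$ with equality only at $x=\mathbf{0}$; LaSalle's invariance principle then gives global attractivity, and one checks directly that $\mathbf{0}$ is the only equilibrium when $R^0\le1$.

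\textbf{Step 3: endemic equilibrium for $R^0>1$.} When $R^0>1$ the origin is an unstable (repelling) equilibrium, so it cannot be globally attractive; boundedness of the cube together with strong monotonicity forces trajectories to converge to an equilibrium, which then must lie in $(0,1)^n$ (existence can also be obtained by a Brouwer fixed-point argument on the equilibrium map). For uniqueness and global stability I would use that $f$ is strictly subhomogeneous on the positive orthant: for $\lambda\in(0,1)$ and $x\gg\mathbf{0}$, $f(\lambda x)-\lambda f(x)=\lambda(1-\lambda)\diag(x)\mathcal{B}x\gg\mathbf{0}$ by strong connectivity. A monotone-plus-subhomogeneous dynamical-systems theorem of Hirsch/Smith type then yields a unique equilibrium in $(0,1)^n$ that is globally asymptotically stable there, completing the ``if and only if'' in both directions.

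\textbf{Main obstacle.} Invariance and the $R^0<1$ comparison step are routine; the real work is the critical case $R^0=1$, which genuinely needs the Perron-vector Lyapunov function together with LaSalle rather than linearization, and the uniqueness and global attractivity of the endemic equilibrium for $R^0>1$, where the monotone/subhomogeneous systems machinery does the heavy lifting.
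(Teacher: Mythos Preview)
The paper does not actually prove this lemma: it is quoted as background from \cite[Thm.~4.2, 4.3]{mei2017epidemics_review}, with only the remark that the heterogeneous-rate version follows by ``similar proofs.'' So there is no in-paper proof to compare your proposal against; you are essentially reconstructing the argument of the cited reference, and your outline is the standard one (reduce to the cooperative/irreducible system on the cube, linear comparison for $R^0<1$, Perron Lyapunov function plus LaSalle for $R^0=1$, monotone-subhomogeneous machinery for $R^0>1$).

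One small correction in Step~2: the claim that $\dot V=-w^\top\diag(x)\mathcal{B}x$ vanishes \emph{only} at $x=\mathbf{0}$ is not literally true. For boundary points with some $x_i=0$ and the remaining coordinates supported where $(\mathcal{B}x)_i=0$, one can have $\dot V=0$ with $x\neq\mathbf{0}$ (e.g.\ two nodes with zero diagonal $\beta$'s and $x=(1,0)$). What is true, and what LaSalle actually needs, is that the largest \emph{invariant} set contained in $\{\dot V=0\}$ is $\{\mathbf{0}\}$; this follows from irreducibility of $\mathcal{B}$, since any such boundary point has $\dot x_j>0$ for some $j$ with $x_j=0$. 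With that refinement your sketch is sound and matches the approach of the reference the paper defers to.
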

It is less useful to analyze the existence of equilibria of the networked $SIR$ model, since $SIR$ models have an infinite number of 
healthy equilibria but
cannot have any endemic equilibria. Instead, we are more interested in studying the transient behavior of the $SIR$ model. Thus, we use $w$ to denote
the normalized left eigenvector of the matrix $\mathcal{D}-\mathcal{B}$, and the following lemma summarizes the transient behavior of the networked $SIS$  and $SIR$ models.
\begin{lemma}\cite[Thm. 5.2 , 5.4]{mei2017epidemics_review}
\label{lem:SIR}
The weighted average $w^{\top}x$ is decreasing if and only if $R^t< 1$. The weighted average $w^{\top}x$ is increasing if and only if $R^t>1$. The weighted average $w^{\top}x$ remains unchanging if and only if $R^t = 1$.
\end{lemma}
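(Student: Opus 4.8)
The plan is to track the scalar quantity $V(t)=w^{\top}x(t)$ directly. Since the $x$-equations \eqref{eq:SIS_I} and \eqref{eq:SIR_I} have identical right-hand sides, writing them in vector form as $\dot x=(\diag(s)\mathcal{B}-\mathcal{D})x$ and differentiating $V$ gives
\[
\dot V \;=\; w^{\top}\big(\diag(s)\mathcal{B}-\mathcal{D}\big)x \;=\; w^{\top}\mathcal{D}\big(\diag(s)\mathcal{D}^{-1}\mathcal{B}-I\big)x,
\]
where the second equality uses that $\mathcal{D}$ and $\diag(s)$ commute, so $\mathcal{D}\diag(s)\mathcal{D}^{-1}=\diag(s)$. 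This surfaces the matrix $\diag(s)\mathcal{D}^{-1}\mathcal{B}$, whose spectral radius is $R^{t}$, and reduces the claim to showing that $\dot V$ has the same sign as $R^{t}-1$ (and $\dot V=0$ precisely when $R^{t}=1$).

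The second step is to exploit the Perron--Frobenius/Metzler structure implied by strong connectivity. The matrix $\mathcal{N}(t):=\diag(s)\mathcal{D}^{-1}\mathcal{B}$ is nonnegative and irreducible, since $s_i(t)>0$ keeps the zero pattern of $\mathcal{B}$, and $A(t):=\diag(s)\mathcal{B}-\mathcal{D}=\mathcal{D}(\mathcal{N}(t)-I)$ is an irreducible Metzler matrix. By the standard sign-equivalence between the Perron root of a nonnegative matrix and the spectral abscissa of the associated Metzler matrix, $\mu(A(t))$ and $R^{t}-1=\rho(\mathcal{N}(t))-1$ have the same sign. I would then contract $\dot V=w^{\top}A(t)x$ against the left Perron eigenvector of $A(t)$ and use that strong connectivity makes $x(t)$ entrywise positive for every $t>0$ along any trajectory with $x(0)\neq\0$ (and $s(t)$ likewise positive); this forces $\operatorname{sign}(\dot V)=\operatorname{sign}(\mu(A(t)))=\operatorname{sign}(R^{t}-1)$, so the three cases $R^{t}<1$, $R^{t}>1$, $R^{t}=1$ give a decreasing, increasing, and unchanging $w^{\top}x$, respectively, with strictness in the first two cases coming from positivity of $x$ and irreducibility.

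The part I expect to require the most care is the link between the \emph{fixed} weight $w$ --- here the normalized left eigenvector of $\mathcal{D}-\mathcal{B}$ --- and the sign of the time-varying bilinear form $w^{\top}A(t)x$, because $A(t)$ depends on the state through $\diag(s)$ whereas $w$ does not. Making this precise means relating $w$ to the left Perron eigenvector of $A(t)$ through the diagonal similarity $A(t)=\mathcal{D}(\mathcal{N}(t)-I)$ and the eigenstructure of $\mathcal{D}-\mathcal{B}$, and using monotonicity of the Perron root along the entrywise-monotone family $s\mapsto\diag(s)\mathcal{D}^{-1}\mathcal{B}$ (which, for the $SIR$ model, is consistent with $s(t)$ being nonincreasing by \eqref{eq:SIR_S}). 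This is exactly the content of \cite[Thm.~5.2, 5.4]{mei2017epidemics_review}, so I would invoke that result (or reproduce its argument); everything else above is routine once this link is in place.
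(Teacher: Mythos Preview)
The paper does not supply its own proof of this lemma: it is quoted as \cite[Thm.~5.2, 5.4]{mei2017epidemics_review}, with only the remark afterward that the argument there extends from homogeneous to heterogeneous transmission rates. So there is nothing in-paper to compare against beyond ``cite the reference.''

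On its own merits, your proposal has a genuine gap at exactly the point you flag. The first two steps are fine: writing $\dot V = w^{\top}(\diag(s)\mathcal{B}-\mathcal{D})x$ and observing that the spectral abscissa of $A(t)=\diag(s)\mathcal{B}-\mathcal{D}$ has the same sign as $R^{t}-1$ are routine. But the fixed vector $w$ here is the left Perron eigenvector of $\mathcal{B}-\mathcal{D}$, not of $A(t)$, and for general positive $s,x$ the sign of $w^{\top}A(t)x$ need \emph{not} agree with $\operatorname{sign}\mu(A(t))$; your sentence ``contract $\dot V=w^{\top}A(t)x$ against the left Perron eigenvector of $A(t)$'' does not go through, because $w$ is simply not that eigenvector once $s\neq\mathbf{1}$. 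Resolving this by writing ``this is exactly the content of \cite[Thm.~5.2, 5.4]{mei2017epidemics_review}, so I would invoke that result'' is circular --- that \emph{is} the lemma. To close the argument you must either (i) take $w=w(t)$ to be the time-varying left Perron eigenvector of $A(t)$ (in which case the claim is really about $w(t)^{\top}\dot x(t)$, and the equivalence with $R^{t}\gtrless 1$ is immediate from Perron--Frobenius), or (ii) actually reproduce the structural argument from the reference that ties the \emph{fixed} $w$ to the sign of $\dot V$ along trajectories. Either way, the step you label ``requiring the most care'' is the entire content of the proof, and it is not filled in here.
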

Note that the original theorems \cite[Thm. 4.2, 4.3]{mei2017epidemics_review} and \cite[Thm. 5.2, 5.4]{mei2017epidemics_review} capture the case of homogeneous transmission rates in a network. However, the results 
for heterogeneous transmission networks listed in Lemma~\ref{lem:SIS} and Lemma~\ref{lem:SIR} can be obtained through the similar proofs given by \cite[Thm. 4.2, 4.3]{mei2017epidemics_review} and \cite[Thm. 5.2, 5.4]{mei2017epidemics_review}.
\vspace{-2.5ex}
\subsection{Motivation and Problem Statements}
Lemma~\ref{lem:SIS} and Lemma~\ref{lem:SIR} characterize the spreading behaviors of the networked models in \eqref{eq:SIS} and \eqref{eq:SIR} through the \textit{reproduction numbers of the network}, i.e., $R^0$ and $R^t$.
However, these reproduction numbers of the network may fail to capture the spreading behavior of individual entities within the network. 
To illustrate this point, 
Fig.~\ref{fig_inf} presents the infected proportion of each community in a networked $SIR$ model over ten communities in a strongly connected graph. 
The dashed line indicates $w^{\top}x$. Fig.~\ref{fig_eff_r} shows the corresponding effective reproduction number of the network, i.e., $R^t$. The effective reproduction number $R^t>1$ 
until roughly timestep~$20$. However, the infected proportions of most communities, including $x_1$, $x_2$, and $x_6$, already start decreasing by timestep~$20$. 
\begin{figure}
  \begin{center}
    \includegraphics[ trim = 0cm 0cm 0cm 0cm, clip, width=\columnwidth]{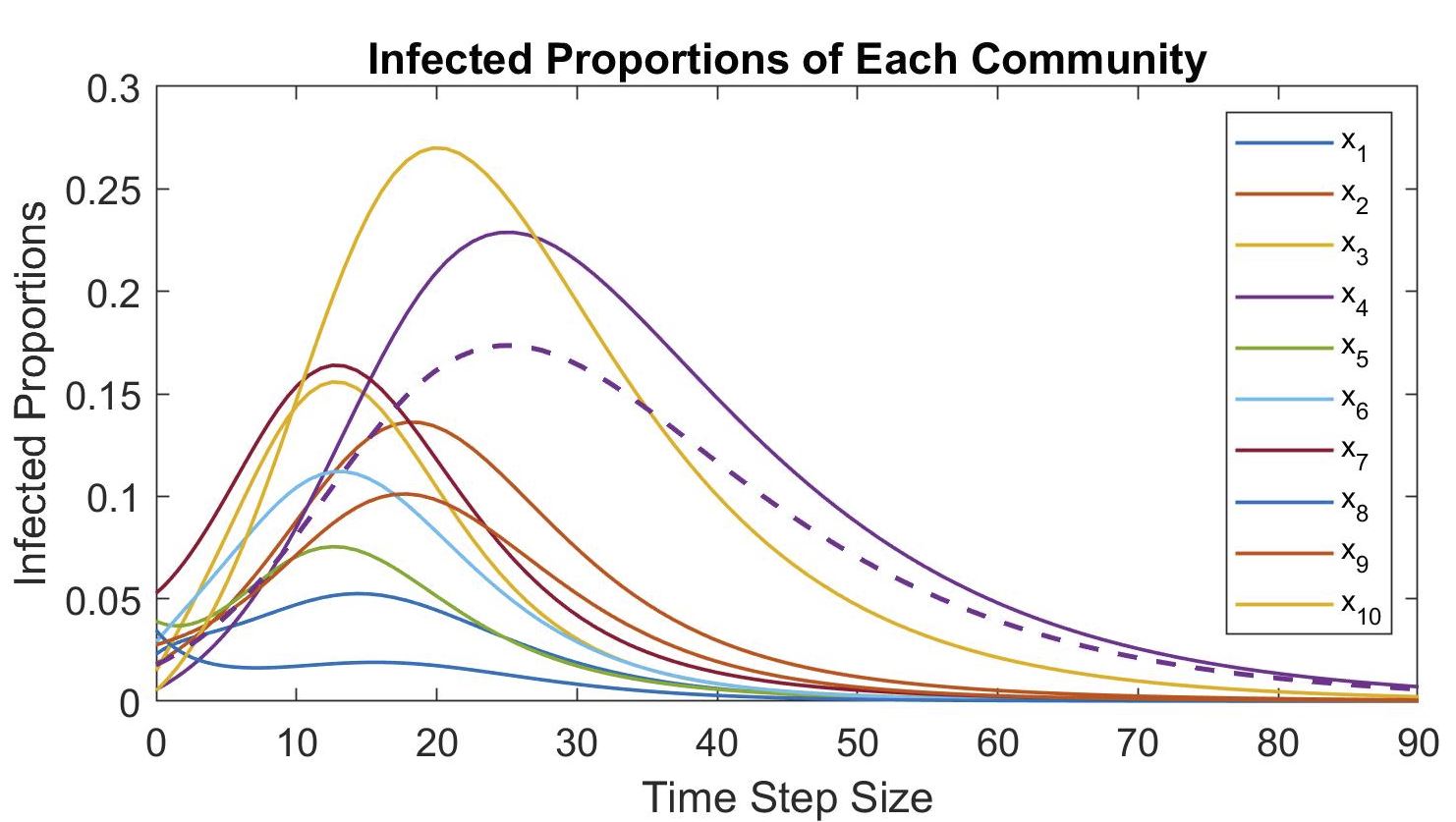}
  \end{center}
    \vspace{-2ex}
  \caption{Infected Proportions of Each Community}
  \label{fig_inf}
  \vspace{-2ex}
\end{figure}
\begin{figure}
  \begin{center}
    \includegraphics[ trim = 2.2cm 1.2cm 2.2cm 0cm, clip, width=\columnwidth]{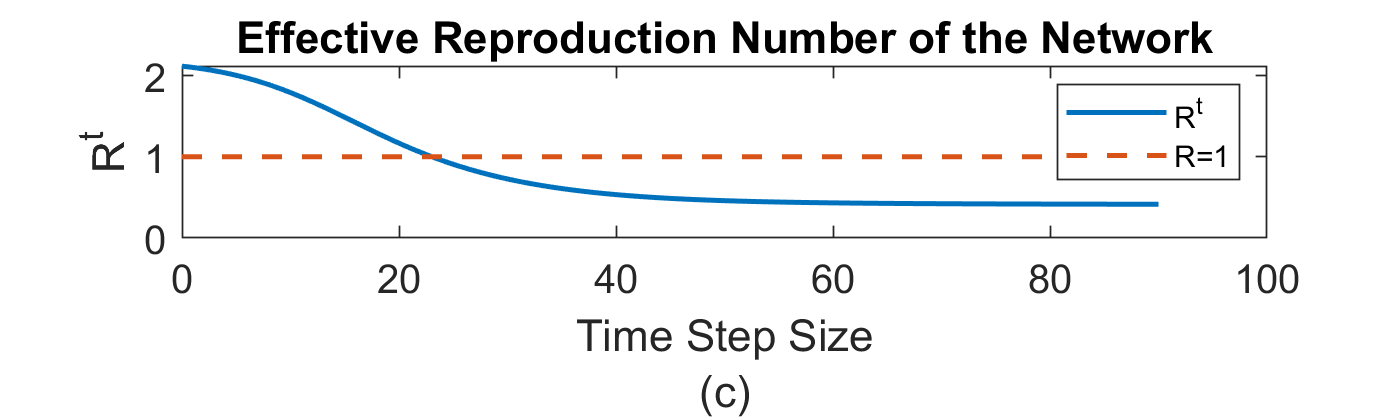}
  \end{center}
    \vspace{-2ex}
  \caption{Effective Reproduction Number of the Networked Epidemic}
    \vspace{-3ex}
  \label{fig_eff_r}
\end{figure}
Therefore,
if we aim to analyze an individual community or a subnetwork of connected communities within the network, the reproduction numbers of the network, $R^0$ and/or $R^t$, may fail to capture the spreading behavior,
since both thresholds in Lemmas~\ref{lem:SIS} and ~\ref{lem:SIR} characterize network-level spreading only.

Motivated by this discussion of the reproduction numbers of networks, we formulate the following problem statements. 
\begin{problem}
\label{prob:1}
How can we define \textit{distributed reproduction numbers} to capture the spreading behaviors within individual communities and between the communities within a network?
\end{problem}
\begin{problem}
\label{prob:2}
Compared to the network-level reproduction numbers of a network,
can we use the distributed reproduction numbers to study the spreading processes at both the individual-level and the network-level?
\end{problem}
\begin{problem}
\label{prob:3}
When capturing epidemic spreading processes in application, what are the advantages of leveraging the distributed reproduction numbers 
developed?
\end{problem}
We will answer Questions~\ref{prob:1}-\ref{prob:3} in the next two sections.
\vspace{-2ex}
\section{Distributed Reproduction Numbers}
\label{section3}
In this section, we  define distributed reproduction numbers for the networked $SIS$ and $SIR$ models, which answers Question~$1$.
We will leverage the distributed reproduction numbers to study the transient and steady-state behaviors of the spreading models.
In order to answer Question~\ref{prob:2}, we will bridge the gap between the distributed reproduction numbers and the network-level reproduction numbers 
by showing that the distributed reproduction numbers can capture the spreading behavior at both the individual- and network-levels.
\vspace{-5ex}
\subsection{Definition of Distributed Reproduction Numbers}
One way to study epidemic spreading processes is to leverage reproduction numbers to indicate the change of the infected population (e.g., increasing, decreasing, unchanging.). As indicated in Lemmas~\ref{lem:SIS} and \ref{lem:SIR}, network-level reproduction numbers can capture the overall spreading behavior within a network. 
However, the spreading behavior of an individual community might not be captured by the reproduction numbers of the network.
Thus, based on the intuition that the infected population of community $i$ will increase if the effective reproduction number of community~$i$ is greater than~$1$, and vice versa, we introduce the following definition of distributed reproduction numbers. 
\begin{definition}[Distributed Reproduction Numbers]
\label{Def:NRN}
For each $i \in \underline{n}$,
define $R^{0}_{ii} = \frac{\beta_{ii}}{\gamma_i}$ as the basic reproduction number within community $i$ itself, and
define $R^{0}_{ij} = \frac{\beta_{ij}}{\gamma_i}$  as the basic reproduction number from community $j$ to community $i$ for each $j \in \underline{n}$. 
For each $i \in \underline{n}$, define $R^{t}_{ii} = \frac{s_i\beta_{ii}}{\gamma_i}$ as the effective reproduction number within community $i$, and define $R^{t}_{ij} = \frac{s_i\beta_{ij}}{\gamma_i}$ for each $j \in \underline{n}$ as the pseudo-effective reproduction number from community $j$ 
to community $i$. 
In addition, we define $I_{ij} = \frac{x_j}{x_i}$ with $x_i, x_j\in(0,1]$ as the infection ratio from the infected proportions of community $j$ to community $i$. Then, we define the effective reproduction number from community $j$ 
to community $i$ as $\bar{R}_{ij}=R^{t}_{ij}I_{ij}$.
\end{definition} 

In order to explain the intuition behind Definition~\ref{Def:NRN}, we consider the group compartmental $SIS$ and $SIR$ models with $\beta$ and $\gamma$ being the transmission and recovery rates, respectively \cite{mei2017epidemics_review}. Note that the two 
models admit $\frac{\beta}{\gamma}$ and $\frac{s(t)\beta}{\gamma}$ as the basic and effective reproduction numbers, respectively.
Based on these terms, we then choose to use
 $R^0_{ij}=\frac{\beta_{ij}}{\gamma_{i}}$ and $R^t_{ij}=\frac{s_i(t)\beta_{ij}}{\gamma_{i}}$ for the basic and pseudo-effective reproduction numbers of the infected proportion $x_{ij}$, where $x_{ij}$ denotes the infected proportion in community $i$ generated by the infected proportion in community $j$ for all $i,j\in\underline{n}$. 
Through the definition, we have $\frac{dx_{ij}}{dt}>0$ if and only if $R^t_{ij}>1$, and vice versa. Further, we define the scaled infected proportion $\bar{x}_{ij}=x_{ij}I_{ij}$ for all $i,j\in\underline{n}$. Note that the 
scaled infected proportion $\bar{x}_{ij}$ can be considered as a normalized infected proportion of $\bar{x}_{ij}$ with respect to community $i$. When computing effective reproductions of community $i$, it is necessary to evaluate the infections from different resources at the same scale.
The following lemma shows the use of $\bar{R}^t_{ij}=R^t_{ij}I_{ij}$.
\begin{lemma} \label{lem:x_ij}
The scaled infected proportion generated by community $j$ in community $i$, denoted by $\bar{x}_{ij}$
for all $i,j \in\underline{n}$, are increasing if and only if $\bar{R}^t_{ij}>1$, and they are decreasing
if and only if~$\bar{R}^t_{ij} < 1$. 
\end{lemma}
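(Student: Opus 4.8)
The plan is to reduce Lemma~\ref{lem:x_ij} to a single scalar differential equation for $\bar{x}_{ij}$ along the trajectories of the networked $SIS$ model \eqref{eq:SIS} and the networked $SIR$ model \eqref{eq:SIR}, and then to show that the right-hand side of that equation is a strictly positive quantity multiplied by $(\bar{R}^t_{ij}-1)$. Since $\bar{R}^t_{ij}$ itself varies in time, the three equivalences are to be read instantaneously, in the same spirit as Lemma~\ref{lem:SIR}: at each time $t$ the sign of $\frac{d\bar{x}_{ij}}{dt}$ should equal the sign of $\bar{R}^t_{ij}-1$. Thus the whole argument amounts to identifying that equation, factoring it, and checking signs.

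First I would write down the dynamics of the sub-proportion $x_{ij}$ introduced in the paragraph preceding the lemma, namely the group-type equation $\frac{dx_{ij}}{dt}=s_i\beta_{ij}x_j-\gamma_i x_{ij}$, which holds identically for the $SIS$ and $SIR$ models because \eqref{eq:SIS_I} and \eqref{eq:SIR_I} agree and which is consistent with the decomposition $x_i=\sum_{j\in\underline{n}}x_{ij}$. Next I would pass to $\bar{x}_{ij}=x_{ij}I_{ij}$ with $I_{ij}=x_j/x_i$; the purpose of the factor $I_{ij}$ is to rescale the infected sub-proportion that community $j$ contributes to community $i$ onto the scale of community $i$, so that $\bar{x}_{ij}$ obeys a group-model equation with \emph{effective} transmission rate $s_i\beta_{ij}I_{ij}$ and recovery rate $\gamma_i$, that is, $\frac{d\bar{x}_{ij}}{dt}=\bigl(s_i\beta_{ij}I_{ij}-\gamma_i\bigr)\bar{x}_{ij}$. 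Factoring out $\gamma_i>0$ and using $\bar{R}^t_{ij}=R^t_{ij}I_{ij}=s_i\beta_{ij}I_{ij}/\gamma_i$ from Definition~\ref{Def:NRN} rewrites this as $\frac{d\bar{x}_{ij}}{dt}=\gamma_i\bigl(\bar{R}^t_{ij}-1\bigr)\bar{x}_{ij}$.

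To conclude, I would invoke the standard positivity and forward-invariance of \eqref{eq:SIS} and \eqref{eq:SIR}: for any initial condition with a nonzero infected proportion, $x_i(t),x_j(t)\in(0,1]$ for all $t$, so that $I_{ij}$, and hence $\bar{x}_{ij}$, is well defined and strictly positive, consistent with the standing assumption $x_i,x_j\in(0,1]$ in Definition~\ref{Def:NRN}. Then $\gamma_i\bar{x}_{ij}>0$, so $\frac{d\bar{x}_{ij}}{dt}$ inherits the sign of $\bar{R}^t_{ij}-1$: it is positive iff $\bar{R}^t_{ij}>1$ and negative iff $\bar{R}^t_{ij}<1$ (and zero iff $\bar{R}^t_{ij}=1$), which is exactly the claim. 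I expect the only real obstacle to be the bookkeeping that precedes the factorization: one must justify, rather than merely posit, that $\bar{x}_{ij}$ satisfies a group-type equation with transmission rate $s_i\beta_{ij}I_{ij}$, and one must keep straight that the lemma deliberately pairs the \emph{scaled} proportion $\bar{x}_{ij}$ with the \emph{effective} reproduction number $\bar{R}^t_{ij}$. The unscaled pairing of $x_{ij}$ with the pseudo-effective $R^t_{ij}$ does not admit a clean threshold at $1$, and restoring that threshold is precisely the role of the $I_{ij}$ rescaling; once this is in place, everything after the factorization is a one-line sign argument.
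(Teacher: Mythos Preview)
Your overall plan—write the derivative of $\bar{x}_{ij}$ and read off the sign of its right-hand side as a positive multiple of $(\bar R^t_{ij}-1)$—is exactly the paper's. The paper, however, takes a shorter route: it simply records $\frac{d\bar x_{ij}}{dt}=s_i\beta_{ij}x_j-\gamma_i x_i$ as the governing relation and observes that this quantity is positive iff $\frac{s_i\beta_{ij}}{\gamma_i}\cdot\frac{x_j}{x_i}=R^t_{ij}I_{ij}=\bar R^t_{ij}>1$, by dividing through by the positive factor $\gamma_i x_i$. No separate $x_{ij}$ dynamics or factorization through $\bar x_{ij}$ is invoked.

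Where your proposal diverges is the intermediate equation. The claimed group-model form $\frac{d\bar x_{ij}}{dt}=(s_i\beta_{ij}I_{ij}-\gamma_i)\bar x_{ij}$ does \emph{not} follow from $\bar x_{ij}=x_{ij}I_{ij}$ together with $\frac{dx_{ij}}{dt}=s_i\beta_{ij}x_j-\gamma_i x_{ij}$: differentiating the product brings in an extra $x_{ij}\,\frac{dI_{ij}}{dt}$ term, and $I_{ij}=x_j/x_i$ is not constant along trajectories. So the step you yourself flag as ``the only real obstacle'' is in fact a genuine gap—your equation is a separate postulate, not a consequence of the $x_{ij}$ dynamics you wrote down, and it does not coincide with the paper's expression unless $\bar x_{ij}=x_i$. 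Both your postulated equation and the paper's yield the same threshold at $\bar R^t_{ij}=1$, so the final conclusion survives, but the clean fix is to drop the $x_{ij}$ detour and work directly with $s_i\beta_{ij}x_j-\gamma_i x_i=\gamma_i x_i(\bar R^t_{ij}-1)$, as the paper does.
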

\begin{proof}
Based on the fact that $\frac{d\bar{x}_{ij}}{dt}=s_i\beta_{ij}x_j-\gamma_ix_i>0$ if and only if 
$\frac{s_i\beta_{ij}}{\gamma_{i}}I_{ij}=R^t_{ij}I_{ij}>1$, and $\bar{R}^t_{ij}=R^t_{ij}I_{ij}$, it immediately follows that,
for all $i,j \in\underline{n}$,  $x_{ij}$ are increasing if and only if $\bar{R}^t_{ij}>1$. We can use the same technique to show that $\bar{x}_{ij}$ is decreasing if and only if $\bar{R}^t_{ij}<1$. 
\end{proof}
Based on Lemma~\ref{lem:x_ij}, 
the pseudo-effective reproduction numbers together with the infection ratios, i.e., $\bar{R}^t_{ij}=R^t_{ij}I_{ij}$,
are used as the effective reproduction number from community $j$ to community $i$ for all $i,j\in\underline{n}$. 
The basic reproduction number captures the situation where $s_i\approx1$ for all $i\in\underline{n}$, and the effective reproduction number captures cases in which $x_i\in (0,1]$ for all $i\in\underline{n}$. Definition~\ref{Def:NRN} proposes distributed reproduction numbers by separating the infected cases generated in community $i$ in two ways: 
(i) the new cases that are generated through the infected cases within the community itself, and (ii) the new cases that are generated through the infected cases from 
neighboring communities. Hence, we use two types of reproduction numbers, namely the reproduction numbers within a community ($R^0_{ii}$ and $R^t_{ii}$) and the reproduction numbers from one community to another community ($R^0_{ij}$ and $\bar{R}^t_{ij}$), to capture the two types of infection processes. In addition, similar to the reproduction numbers of group compartmental models, we have $R^t_{ii}=s_iR^0_{ii}$ within community $i$ for all $i\in\underline{n}$. 
For the pseudo-effective reproduction numbers from community $j$ to $i$, we have $R^t_{ij}=s_iR^0_{ij}$ for all  $i,j\in\underline{n}$. 

Definition~\ref{Def:NRN} and Lemma~\ref{lem:x_ij} propose a way of explaining spread processes through the distributed reproduction numbers within and between communities. For the purpose of characterizing the spread process of a community, we further define the basic reproduction number and effective reproduction number of a community within the network, through the distributed reproduction numbers in Definition~\ref{Def:NRN}.
\begin{definition}[Reproduction Numbers of Community $i$]
\label{def:R_c}
For all $i \in \underline{n}$, let $R^0_{i}$ denote the basic reproduction number of community $i$, and let $R^t_{i}$ denote the effective reproduction number of community $i$, where
\begin{align} 
   R^0_{i} &= \sum_{j=1}^{n} R^{0}_{ij}, \label{Net_R0} \\
   \bar{R}_i^t &= \sum_{j=1}^{n}\bar{R}_{ij}^t   =\sum_{j=1}^{n} R^t_{ij}I_{ij}. \label{Net_R}
\end{align}
\end{definition}
\begin{remark}
The reproduction numbers defined in \eqref{Net_R0} and \eqref{Net_R} quantify the relationship between the infection within community $i$ and the infections in other communities. Specifically, the basic and effective reproduction numbers of a community are built upon the distributed reproduction numbers from Definition~\ref{Def:NRN}.  Eq.~\eqref{Net_R0} indicates that the basic reproduction number of community $i$ within the network is the sum of the basic reproduction number  within the community $i$ itself and the basic reproduction numbers introduced by its neighbors. Similarly, Eq.~\eqref{Net_R} indicates that 
the effective reproduction number of community $i$ within the network is the sum of the effective reproduction number within 
community $i$ itself and the 
effective reproduction numbers introduced by its neighbors. Further, 
the 
effective reproduction numbers introduced by its neighbors 
are scaled by the infection ratio $I_{ij}$.
For instance, if community $i$ has a lower infected proportion than community $j$ (i.e., $x_i<x_j$), 
then the effective reproduction number from community $j$ to community $i$ will be scaled up by $I_{ij}$. Hence, the effective reproduction number of community $i$ ($\bar{R}^t_i$) can be high, even if the effective reproduction number within community $i$ ($R^t_{ii}$) and the pseudo-effective reproduction numbers from community $j$ to community $i$ ($R^t_{ij}$) are low, since the weights $I_{ij}$ that are also critical can be large. 
\end{remark}
\vspace{-4ex}
\subsection{Properties of Distributed Reproduction Numbers}
Through the distributed reproduction numbers introduced in Definition~\ref{Def:NRN} and Lemma \ref{lem:x_ij}, we can compose the reproduction numbers of an individual community through the sum of the distributed reproduction numbers, as shown in Definition~\ref{def:R_c}. 
Compared to the effective reproduction number of the network ($R^t$), the effective reproduction number of an individual community ($\bar{R}^t_i$) can facilitate the study of the spreading behavior of community $i$ for all $i \in \underline{n}$.
\begin{theorem}
\label{thm:net_r}
When the infected population in community $i$ is nonzero, i.e., $x_i(t)>0$, 
the effective reproduction number $\bar{R}^t_i>1$ if and only if the infected proportion $x_i$ increases; $\bar{R}^t_i<1$ if and only if $x_i$ decreases; $\bar{R}^t_i = 1$ if and only if $x_i$ remains unchanged.
\end{theorem}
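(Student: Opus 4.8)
The plan is to reduce everything to the single differential equation governing $x_i$, which is identical in the $SIS$ model \eqref{eq:SIS_I} and the $SIR$ model \eqref{eq:SIR_I}, namely $\frac{dx_i}{dt} = \sum_{j\in\underline{n}} s_i\beta_{ij}x_j - \gamma_i x_i$. Under the standing hypothesis $x_i(t)>0$, together with $\gamma_i>0$, the scalar $\gamma_i x_i$ is strictly positive, so dividing any of the (in)equalities $\frac{dx_i}{dt} > 0$, $\frac{dx_i}{dt} < 0$, $\frac{dx_i}{dt}=0$ through by $\gamma_i x_i$ preserves its direction. This is the only analytical fact needed; unlike Lemmas~\ref{lem:SIS} and~\ref{lem:SIR}, no Lyapunov function or spectral-radius argument is involved.

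Concretely, I would first write $\frac{dx_i}{dt} > 0 \iff \sum_{j\in\underline{n}} s_i\beta_{ij}x_j > \gamma_i x_i$, and then, dividing by $\gamma_i x_i>0$, $\iff \sum_{j\in\underline{n}}\frac{s_i\beta_{ij}}{\gamma_i}\cdot\frac{x_j}{x_i} > 1$. Next I would identify the left-hand side with $\bar{R}^t_i$: by Definition~\ref{Def:NRN} each summand is $R^t_{ij}I_{ij}=\bar{R}^t_{ij}$, and by \eqref{Net_R} their sum over $j$ is exactly $\bar{R}^t_i$. Hence $\frac{dx_i}{dt} > 0 \iff \bar{R}^t_i > 1$. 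Running the same chain of equivalences with ``$<$'' and with ``$=$'' in place of ``$>$'' — legitimate because every step only multiplies or divides by the fixed positive number $\gamma_i x_i$ — gives $\frac{dx_i}{dt} < 0 \iff \bar{R}^t_i < 1$ and $\frac{dx_i}{dt} = 0 \iff \bar{R}^t_i = 1$. Since ``$x_i$ increases/decreases/remains unchanged'' is by definition the sign of $\frac{dx_i}{dt}$, this is the claimed statement, and the argument applies verbatim to both the $SIS$ and $SIR$ models.

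I do not expect a genuine obstacle here: the substance of the theorem lies in the modeling choices in Definitions~\ref{Def:NRN} and~\ref{def:R_c}, not in the proof. The one point deserving care is purely bookkeeping — checking that the weighting by the infection ratios $I_{ij}=x_j/x_i$ in \eqref{Net_R} is precisely what turns $\big(\sum_{j} s_i\beta_{ij}x_j\big)/(\gamma_i x_i)$ into $\sum_{j} R^t_{ij}I_{ij}$ — together with noting that the hypothesis $x_i(t)>0$ is essential, since $I_{ij}$ and hence $\bar{R}^t_i$ are undefined when $x_i=0$ (at the healthy equilibrium one would instead invoke the basic reproduction numbers $R^0_i$ of \eqref{Net_R0}). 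I would present the division step explicitly so the reader sees that no cancellation fails, and then state the three equivalences together.
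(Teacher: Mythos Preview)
Your proposal is correct and matches the paper's own proof essentially line for line: both arguments amount to dividing the identity $\frac{dx_i}{dt}=\sum_{j}s_i\beta_{ij}x_j-\gamma_i x_i$ through by the positive scalar $\gamma_i x_i$ and recognizing the resulting quotient as $\bar{R}^t_i$ via Definitions~\ref{Def:NRN} and~\ref{def:R_c}. The only cosmetic difference is that the paper writes the two implications separately and then says the other cases follow by the same procedure, whereas you package each case as a single chain of equivalences.
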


\begin{proof}
We show the first statement since the proof of the other statements can follow the same procedure.

$\Leftarrow:$ Recall the definition of the effective reproduction number of community $i$, namely $\bar{R}_i^t = \sum_{j=1}^{n} R^t_{ij}I_{ij}$, for all $i \in \underline{n}$. 
Hence, $\bar{R}_i^t>1$ gives $\bar{R}_i^t = \sum_{j=1}^{n} R^t_{ij}I_{ij}>1$. Then, through Definition~\ref{Def:NRN}, it is true that $\sum_{j=1}^{n}\frac{s_i\beta_{ij}x_j}{x_i\gamma_i}>1$, which leads to $\frac{dx_i}{dt}=\sum_{j=1}^{n} s_i\beta_{ij}x_j-\gamma_ix_i>0$, for all $i\in \underline{n}$. Hence,~$x_i$ is increasing. 

$\Rightarrow:$ If the infected proportion of community $i$ increases,  then $\frac{dx_i}{dt}=\sum_{j=1}^{n} s_i\beta_{ij}x_j-\gamma_ix_i>0$, 
and re-arranging terms immediately gives $\bar{R}_i^t=\sum_{j=1}^{n}\frac{s_i\beta_{ij}x_j}{x_i\gamma_i}>1$, since we have $x_i(t)>0$.
\end{proof}

Theorem~\ref{thm:net_r} demonstrates that the definition of the effective reproduction numbers exhibits thresholding behavior, 
and thus that we can leverage the effective reproduction numbers of communities to capture the spreading behavior within
and between them. 
In addition, recall that the effective reproduction number of the whole network, $R^t$, is monotonically non-increasing, since for all~$i \in \underline{n}$, the value of $s_i$ is monotonically non-increasing \cite{mei2017epidemics_review}. However, 
for all~$i \in \underline{n}$ the value of 
$\bar{R}_i^t$ can be non-monotonic.
\begin{lemma}
\label{lem:non}
For all~$i \in \underline{n}$, the effective reproduction number
$\bar{R}_i^t$ of community~$i$ can be non-monotonic.
\end{lemma}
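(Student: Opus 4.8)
The plan is to prove the lemma by exhibiting an explicit counterexample, since the statement only asserts that non-monotonicity \emph{can} occur; it therefore suffices to produce one networked model together with one community whose effective reproduction number is non-monotone in time (for a specific index $i$ one then just relabels the communities, adding weakly coupled extra nodes if needed to keep the graph strongly connected). The key reduction is Theorem~\ref{thm:net_r}: whenever $x_i(t)>0$, the sign of $\bar R_i^t(t)-1$ equals the sign of $\tfrac{dx_i}{dt}(t)$. Hence it is enough to exhibit a networked $SIR$ (or $SIS$) model in which some community's infected proportion $x_i(t)$ has an interior local \emph{minimum} at a strictly positive value, i.e.\ $x_i$ decreases, then increases, then decreases again: then $\bar R_i^t$ takes a value ${<}1$, later a value ${>}1$, and later again a value ${<}1$, so it cannot be monotone. (A single-peaked $x_i$ would not suffice, since a $\bar R_i^t$ that is monotonically decreasing and crosses $1$ only once is consistent with that; I genuinely need $\tfrac{dx_i}{dt}$ to change sign at least twice.)

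To obtain such an $x_i$ I would invoke the standard phenomenon of epidemic resurgence in a coupled network. Take $n=2$ communities on a strongly connected graph with $\gamma_1=\gamma_2=1$. Choose $\beta_{11}$ moderate, so community $1$ has an initial wave but does not exhaust its susceptibles; choose $\beta_{22}$ large together with $x_2(0)$ very small, so community $2$'s epidemic is delayed but eventually large; and choose $\beta_{12}>0$ (transmission $2\to 1$) large enough that, once community $2$ peaks, the influx $s_1\beta_{12}x_2$ makes $\tfrac{dx_1}{dt}=s_1(\beta_{11}x_1+\beta_{12}x_2)-\gamma_1 x_1$ positive again after community $1$'s first wave has subsided. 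This produces $\tfrac{dx_1}{dt}$ with sign pattern $+,-,+,-$, hence by Theorem~\ref{thm:net_r} a $\bar R_1^t$ with value pattern ${>}1,{<}1,{>}1,{<}1$, which is non-monotonic.

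The main obstacle is certifying that the second (resurgent) hump actually appears, i.e.\ that the chosen parameters really do force $\tfrac{dx_1}{dt}$ to change sign the required number of times; this is a quantitative transient claim about a nonlinear ODE, not something that follows from the monotone structure used elsewhere in the paper. I would handle it the way such existence claims are usually handled here: by numerically integrating the chosen model and plotting $x_1(t)$ together with $\bar R_1^t(t)$, displaying the two threshold crossings at $1$, and noting that the community-level trajectories already shown in Fig.~\ref{fig_inf} / Section~IV exhibit exactly this kind of non-monotone behavior, so they double as a witness. If a fully analytic argument were desired, I would instead lower-bound the susceptible level $s_1$ left after community $1$'s first wave and lower-bound community $2$'s peak value $x_2$, then substitute both into $\tfrac{dx_1}{dt}$ evaluated at the trough to show it is strictly positive there; but for an existence statement the numerical witness is the economical route.
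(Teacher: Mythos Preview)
Your approach is correct but differs substantially from the paper's. The paper argues in one line from the structure of the formula: $\bar R_i^t=\sum_j R_{ij}^t I_{ij}$, where each $R_{ij}^t=s_i\beta_{ij}/\gamma_i$ is monotonically non-increasing (because $s_i$ is), while the weights $I_{ij}=x_j/x_i$ need not be monotone; hence the weighted sum need not be monotone either. That is really a plausibility observation rather than an exhibited instance. You instead invoke Theorem~\ref{thm:net_r} to convert the question into producing at least two sign changes of $\tfrac{dx_i}{dt}$, and then build an explicit two-community $SIR$ resurgence scenario (a delayed large wave in community~$2$ re-igniting community~$1$) as a witness. What your route buys is an actual example: once the numerical integration is supplied, non-monotonicity is certified rather than merely asserted, and you correctly note that a single-peaked $x_i$ is \emph{not} a witness, a subtlety the paper's argument does not address. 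What the paper's route buys is brevity and a mechanistic explanation---it identifies the ratio $I_{ij}$ as the source of non-monotonicity without any computation.
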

\begin{proof}
The effective reproduction number $\bar{R}_i^t$ is a weighted sum of $R^t_{ij}$ with the weights $I_{ij}$. Based on Definition~\ref{Def:NRN}, $R^t_{ij}$ is monotonically decreasing, since $s_i(t)$ is monotonically decreasing.
Further, the weights $I_{ij}$ are determined by the ratio between the infected proportions of community $j$ and community $i$ 
for all $i,j\in \underline{n}$. The weights $I_{ij}$ can be non-monotonic, and therefore,
for all~$i \in \underline{n}$, 
the effective reproduction number $\bar{R}_i^t$ 
can be non-monotonic.
\end{proof}

Theorem~\ref{thm:net_r}, Lemma~\ref{lem:x_ij}, and Lemma~\ref{lem:non} demonstrate that we can leverage the distributed reproduction numbers to capture spreading behaviors of individual communities. Hence, we have answered Question~\ref{prob:1} from Section~II. In order to answer Question~\ref{prob:2}, we connect distributed reproduction numbers to the network-level reproduction numbers of networks, 
namely $R^0$ and $R^t$. First we define the distributed reproduction number matrices.
\begin{definition}[Distributed Reproduction Number Matrices]
\label{def:MNR}
The distributed basic and effective reproduction number matrices are
\vspace{-2ex}
\begin{gather}
\mathcal{R}^0 = 
\begin{bmatrix}
         R^0_{11} & R^0_{12} & \cdots & R^0_{1n}\\
         R^0_{21} & R^0_{22} & \cdots & R^0_{2n}\\ 
         \vdots & \vdots & \ddots & \vdots\\ 
         R^0_{n1} & R^0_{n2} & \cdots & R^0_{nn} 
     \end{bmatrix},
     \label{eq:R_0}
\end{gather} 

\vspace{-1ex}

\noindent
and 
\vspace{-1ex}
\begin{gather}
\mathcal{R}^t = 
\begin{bmatrix}
         R^t_{11} & R^t_{12} & \cdots & R^t_{1n}\\
         R^t_{21} & R^t_{22} & \cdots & R^t_{2n}\\ 
         \vdots & \vdots & \ddots & \vdots\\ 
         R^t_{n1} & R^t_{n2} & \cdots & R^t_{nn} 
     \end{bmatrix},\label{eq:R_t}
\end{gather} 

\vspace{-1ex}

\noindent
respectively.
\end{definition}
\begin{remark}
The distributed basic reproduction number matrix
$\mathcal{R}^0=\mathcal{D}^{-1}\mathcal{B}$ is the \textit{next generation matrix \cite{diekmann2010construction}} of the networked $SIS$/$SIR$ models. Thus, the distributed effective reproduction number matrix $\mathcal{R}^t$ is equal to~$\diag(s(t))\mathcal{R}^0$. 
However, the advantage of viewing $\mathcal{R}^0$ and $\mathcal{R}^t$ as the composition of distributed reproduction numbers in \eqref{eq:R_0} and \eqref{eq:R_t} is that we can construct these matrices through the distributed reproduction numbers directly from data.
For instance, in real-world epidemic spreading processes, when we need the network-level effective reproduction number $\rho(\mathcal{R}^t)$, 
instead of estimating the model parameters $\beta_{ij}$, $\gamma_{i}$, and $s_i(t)$ to obtain $\mathcal{R}^0$ and $\mathcal{R}^t$, we only need the estimated distributed reproduction numbers $R^t_{ij}$ to compose $\mathcal{R}^t$. 
We will further illustrate this idea in Section~\ref{sec:simulation}. 
\end{remark}
Based on Definition~\ref{Def:Repro}, it can be observed that $\mathcal{R}^0=\mathcal{D}^{-1}\mathcal{B}$ and $\mathcal{R}^t=\diag(s)\mathcal{D}^{-1}\mathcal{B}$. Hence, the spectral radius of the distributed basic reproduction number matrix,
denoted $\rho(\mathcal{R}^0)$, is the basic reproduction number of the network, i.e., $\rho(\mathcal{R}^0)=R^0$. Meanwhile, the spectral radius of the distributed effective reproduction number matrix, denoted $\rho(\mathcal{R}^t)$, is the effective reproduction number of the network, i.e., $\rho(\mathcal{R}^t)=R^t$. Further, for all~$i \in \underline{n}$,
the $i^{th}$ row sum of the distributed basic reproduction number matrix is the reproduction number of community $i$, 
i.e., we have $\sum_{j=1}^n [\mathcal{R}^0]_{ij}=R^0_i$. 
Note that the  $i^{th}$ row sum of the distributed effective reproduction number matrix, $\mathcal{R}^t$, is not equal to $R_i^t$, since the weights $I_{ij}$ are
not included in~$\mathcal{R}^t$. 

Through studying the spreading behavior of the network, we connect the effective reproduction number of the network to the effective reproduction numbers of the communities.
\begin{theorem}
\label{thm:connection}
When the epidemic states are not at a healthy equilibrium, the following statements hold:
\begin{itemize}
    \item $\bar{R}^t_i=1$ for all $i\in \underline{n}$ only if $\rho(\mathcal{R}^t)=1$;
    \item $\bar{R}^t_i<1$ for all $i\in \underline{n}$ only if $\rho(\mathcal{R}^t)<1$;
    \item $\bar{R}^t_i>1$ for all $i\in \underline{n}$ only if $\rho(\mathcal{R}^t)>1$.
\end{itemize}
\end{theorem}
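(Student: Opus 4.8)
The plan is to connect the vector of community-level quantities $\bar{R}^t_i$ to the spectral radius of $\mathcal{R}^t$ by exhibiting the dynamics vector $\dot{x}$ as (essentially) an eigen-type inequality. First I would write, for each $i$ with $x_i > 0$, the identity $x_i \bar{R}^t_i = \sum_{j=1}^n R^t_{ij} x_j = \sum_{j=1}^n [\mathcal{R}^t]_{ij} x_j = [\mathcal{R}^t x]_i$, using Definition~\ref{Def:NRN} together with the observation (made just before the theorem) that $[\mathcal{R}^t]_{ij} = R^t_{ij}$. Since the epidemic is not at a healthy equilibrium and the graph is strongly connected, one first argues $x(t) > \boldsymbol{0}$ componentwise (this is a standard positivity/irreducibility fact for these models), so every $I_{ij}$ and every ratio above is well defined and $x$ is a strictly positive vector. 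Then the three hypotheses translate directly into the componentwise inequalities $\mathcal{R}^t x = \operatorname{diag}(\bar{R}^t_1,\dots,\bar{R}^t_n)\, x \, \{=,<,>\} \, x$, i.e. $\mathcal{R}^t x \lesseqgtr x$ entrywise, with a strict inequality in every coordinate in the strict cases.

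Next I would invoke a Perron--Frobenius comparison argument. The matrix $\mathcal{R}^t = \operatorname{diag}(s)\mathcal{D}^{-1}\mathcal{B}$ is nonnegative and, because the graph is strongly connected and $s(t) > \boldsymbol{0}$, irreducible, so it has a positive left Perron eigenvector $v$ with $v^\top \mathcal{R}^t = \rho(\mathcal{R}^t) v^\top$. Taking the inner product of $v$ with the inequality $\mathcal{R}^t x \lesseqgtr x$ gives $\rho(\mathcal{R}^t)\, (v^\top x) = v^\top \mathcal{R}^t x \lesseqgtr v^\top x$, and since $v > \boldsymbol{0}$ and $x > \boldsymbol{0}$ we have $v^\top x > 0$, so dividing yields $\rho(\mathcal{R}^t) \lesseqgtr 1$ in the corresponding sense. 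For the two strict cases this already finishes the argument; for the equality case $\bar{R}^t_i = 1$ for all $i$, the relation $\mathcal{R}^t x = x$ says $x$ is itself a positive eigenvector of the nonnegative irreducible matrix $\mathcal{R}^t$ with eigenvalue $1$, and by the Perron--Frobenius theorem the only eigenvalue admitting a positive eigenvector is $\rho(\mathcal{R}^t)$, so $\rho(\mathcal{R}^t) = 1$.

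I expect the main obstacle to be the careful handling of positivity: one needs $x(t) > \boldsymbol{0}$ (all components strictly positive) to make the infection ratios $I_{ij}$ finite and nonzero and to guarantee $v^\top x > 0$, and one needs the irreducibility of $\mathcal{R}^t$ (equivalently strong connectivity of the transmission graph together with $s_i(t) > 0$) to apply Perron--Frobenius with a strictly positive eigenvector. The ``not at a healthy equilibrium'' hypothesis is exactly what rules out $x \equiv \boldsymbol{0}$; combined with strong connectivity, a standard irreducibility/forward-invariance argument for \eqref{eq:SIS_I}/\eqref{eq:SIR_I} propagates strict positivity to every community. The remaining steps --- the algebraic rewriting of $\bar{R}^t_i$ as a row action of $\mathcal{R}^t$, and the eigenvector pairing --- are routine once positivity and irreducibility are in place. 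A minor point worth stating explicitly is that the theorem claims only one-directional implications (``only if''), which is consistent with the fact that $\rho(\mathcal{R}^t) > 1$ does not force \emph{every} $\bar{R}^t_i$ to exceed $1$; so no converse needs to be proved.
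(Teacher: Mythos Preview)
Your proposal is correct and arrives at the same conclusion, but by a genuinely different route than the paper. The paper conjugates $\mathcal{R}^t$ by $\diag(x(t))$ and observes that the $i^{th}$ row sum of $\diag(x(t))^{-1}\mathcal{R}^t\diag(x(t))$ is exactly $\bar{R}^t_i$; in the equality case this matrix is row stochastic, so its spectral radius (preserved by similarity) is~$1$, while the strict cases are handled by a contradiction argument together with the monotonicity of the Perron root under entrywise increase of an irreducible nonnegative matrix \cite[Thm.~2.7, Lemma~2.4]{varga2009matrix_book}. You instead rewrite the hypotheses as the vector inequality $\mathcal{R}^t x \lesseqgtr x$ with $x>\boldsymbol{0}$ and pair it with the positive left Perron eigenvector of $\mathcal{R}^t$ to read off $\rho(\mathcal{R}^t)\lesseqgtr 1$ directly, using in the equality case that a positive eigenvector of an irreducible nonnegative matrix must correspond to the Perron root. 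Both arguments rest on the same positivity/irreducibility hypotheses (strong connectivity, $s(t)>\boldsymbol{0}$, and $x(t)>\boldsymbol{0}$, the latter being exactly what is needed either to invert $\diag(x(t))$ or to ensure $v^\top x>0$), and you identify these correctly. Your left-eigenvector pairing is shorter and avoids the auxiliary row-stochastic construction and the contradiction step; the paper's similarity-transform approach has the minor expository advantage of making the role of $\bar{R}^t_i$ as a row sum visually explicit.
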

\begin{proof}
We start by proving the first statement. If $\bar{R}_i^t=1$ and $x_i>0$ for all $i\in \underline{n}$, 
then we have that the matrix $\diag(x(t))^{-1}\mathcal{R}^t\diag(x(t))$ is a row stochastic matrix;
this can be seen by noting that the $i^{th}$ row sum is $\bar{R}_i^t = \sum_{j=1}^{n} R^t_{ij}I_{ij} = 1$. 
Hence, based on the fact that the spectral radius of a row stochastic matrix is~$1$, we have
\begin{equation*}
    \rho\Big(\diag(x(t))^{-1}\mathcal{R}^t\diag(x(t))\Big)=1.
\end{equation*}
Further, it is true that $\rho(\diag(x(t))^{-1}\mathcal{R}^t\diag(x(t)))=\rho(\mathcal{R}^t)=1$,
since similarity transformations preserve eigenvalues, i.e.,
$\diag(x(t))^{-1}\mathcal{R}^t\diag(x(t))$ and $\mathcal{R}^t$ have the same spectrum.
Therefore, if $\bar{R}_i^t=1$ for all $i\in \underline{n}$, 
then we must have $\rho(\mathcal{R}^t) = 1$.

Next we show the second statement. If $\bar{R}^t_i  = \sum_{j=1}^{n} R^t_{ij}I_{ij} <1$ for all $i\in \underline{n}$, then,
using the fact that~$\bar{R}_i^t$ is equal
to the~$i^{th}$ row sum of~$\diag(x(t))^{-1}\mathcal{R}^t\diag(x(t))$, we see that 
we must have $[(\diag(x(t))^{-1}\mathcal{R}^t\diag(x(t))]_{ij}\in[0,1)$ for all $i,j\in \underline{n}$. 
Now suppose for the sake of contradiction 
that $\rho(\diag(x(t))^{-1}\mathcal{R}^t\diag(x(t)))=\rho(\mathcal{R}^t)\geq1$. 
Then, by increasing some non-zero entries of the matrix  $\diag(x(t))^{-1}\mathcal{R}^t\diag(x(t))$ through changing $\mathcal{R}^t$, we can construct a new matrix $\diag(x(t))^{-1}\mathcal{\tilde{R}}^t\diag(x(t))$ such that $\diag(x(t))^{-1}\mathcal{\tilde{R}}^t\diag(x(t))$ is a stochastic matrix,
i.e., its row sums equal~$1$. 
Thus, $\rho (\diag(x(t))^{-1}\mathcal{\tilde{R}}^t\diag(x(t)))=1$. 

Note that the transmission matrix $\mathcal{B}$ is an irreducible matrix since we assume that the graphs that capture the transmission networks are strongly connected. Further, the model parameters $\beta_{ij}$ and $\gamma_{i}$ for all $i,j\in\underline{n}$
along with the infected state $x_i(t)$ for all $i\in\underline{n}$ are positive. Thus, the matrices $\diag(x(t))^{-1}\mathcal{R}^t\diag(x(t))$ and $\diag(x(t))^{-1}\mathcal{\tilde{R}}^t\diag(x(t))$ are nonnegative and irreducible. 

Based on \cite[ Thm. 2.7 and Lemma 2.4]{varga2009matrix_book},
the spectral radius of a non-negative irreducible matrix will increase when any entry of the matrix increases, which gives 
\begin{align*} 
  \rho (\diag(x(t))^{-1}\mathcal{R}^t\diag(x(t)))&<\rho (\diag(x(t))^{-1}\mathcal{\tilde{R}}^t\diag(x(t)))\\
   &=1.
\end{align*}
This result 
contradicts the aforementioned hypothesis that $\rho(\diag(x(t))^{-1}\mathcal{R}^t\diag(x(t)))=\rho(\mathcal{R}^t)\geq1$. Therefore, we must have $\rho(\diag(x(t))^{-1}\mathcal{R}^t\diag(x(t)))=\rho(\mathcal{R}^t)<1$.

We can use the same techniques to show the third statement. Hence, we complete the proof.
\end{proof}
\begin{remark}
Theorem~\ref{thm:connection} bridges the gap between $R^t$ and $\bar{R}^t_i$. Especially, for the case where the overall information of the network is unknown, we can leverage the distributed effective reproduction numbers of each community to indicate the effective reproduction number of the whole network, and further to determine the overall spreading behavior.
\end{remark}
Using Theorem~\ref{thm:connection} and Lemma~\ref{lem:SIS}, we can characterize the spreading behaviors of the $SIS$ and $SIR$
models through~$\bar{R}^t_i$. 

\begin{corollary}
\label{cor:SIS_SIR}
For the networked $SIR$ model,
the healthy equilibria are locally stable if $\bar{R}^t_i<1$ for all $i \in \underline{n}$. For the networked $SIS$ model, if $\bar{R}^t_i>1$
for all $i \in \underline{n}$, then there must exist a unique endemic equilibrium, which is stable.
\end{corollary}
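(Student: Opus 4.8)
The plan is to treat the two assertions separately, using Theorem~\ref{thm:connection} in both cases to pass from the community-level hypotheses on the $\bar R^t_i$ to a statement about $\rho(\mathcal R^t)$, and then invoking Lemma~\ref{lem:SIS} for the $SIS$ part and a linearization argument for the $SIR$ part.

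For the $SIS$ claim, suppose $\bar R^t_i > 1$ for all $i\in\underline n$ at a state with $x\in(0,1]^n$ (such a state is automatically not a healthy equilibrium, so Theorem~\ref{thm:connection} applies). By the third bullet of Theorem~\ref{thm:connection}, $\rho(\mathcal R^t) > 1$. Since $\mathcal R^t = \diag(s)\mathcal D^{-1}\mathcal B = \diag(s)\mathcal R^0$ with $0 < s_i \le 1$, we have $\mathcal R^t \le \mathcal R^0$ entrywise, and both matrices are nonnegative, so monotonicity of the spectral radius gives $R^0 = \rho(\mathcal R^0) \ge \rho(\mathcal R^t) > 1$. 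Then Lemma~\ref{lem:SIS} yields that the $SIS$ model has a unique endemic equilibrium that is globally asymptotically stable, hence in particular stable. This part is a short deduction once Theorem~\ref{thm:connection} is available.

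For the $SIR$ claim, I would fix a healthy equilibrium $(s^*,\0,r^*)$ and linearize \eqref{eq:SIR} about it. Because every term in $ds_i/dt$ and $dr_i/dt$ is proportional to some $x_j$, the linearizations of the $s$- and $r$-blocks vanish, while the $x$-block linearizes to $\dot{\delta x} = (\diag(s^*)\mathcal B - \mathcal D)\,\delta x$; the zero eigenvalues in the $s,r$ directions merely reflect the manifold of healthy equilibria. Thus (by the standard invariance/center-manifold argument for $SIR$-type systems, since trajectories do not leave that manifold in those directions to first order) local stability of $(s^*,\0,r^*)$ reduces to $\diag(s^*)\mathcal B - \mathcal D$ being Hurwitz, and since $\mathcal D^{-1}$ is a positive diagonal matrix commuting with $\diag(s^*)$, this is equivalent to $\rho\big(\diag(s^*)\mathcal D^{-1}\mathcal B\big) < 1$, i.e.\ to $R^t < 1$ evaluated at $s = s^*$. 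Applying the second bullet of Theorem~\ref{thm:connection} to states in a punctured neighborhood of the equilibrium and passing to the limit $x\to\0$, $s\to s^*$ via continuity of the spectral radius then forces $\rho(\diag(s^*)\mathcal D^{-1}\mathcal B) < 1$, so the healthy equilibria are locally stable.

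The main obstacle is the $SIR$ part, and specifically the mismatch between the hypothesis and the equilibrium itself: $\bar R^t_i$ depends on $I_{ij}=x_j/x_i$, which is undefined at $x=\0$, so ``$\bar R^t_i<1$ for all $i$'' must be read as holding on a punctured neighborhood of the healthy equilibrium, and one must argue carefully — using continuity of $\rho(\cdot)$ together with the Collatz--Wielandt-type fact already exploited inside the proof of Theorem~\ref{thm:connection} (that $\diag(s^*)\mathcal D^{-1}\mathcal B\,v < v$ for some $v>0$ implies $\rho<1$, with $v$ playing the role of the perturbation direction in $x$) — that this limiting condition really delivers the strict spectral inequality at $s^*$. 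A secondary point, routine but worth stating rather than glossing over, is the justification that the neutral $s$- and $r$-directions do not destroy Lyapunov stability.
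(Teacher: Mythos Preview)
Your $SIS$ argument is essentially the paper's: Theorem~\ref{thm:connection} gives $\rho(\mathcal R^t)>1$, entrywise monotonicity pushes this up to $\rho(\mathcal R^0)>1$, and Lemma~\ref{lem:SIS} finishes.

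Your $SIR$ argument, however, takes a different and harder route than the paper, and the obstacle you flag is real. The paper does \emph{not} linearize at a healthy equilibrium or try to evaluate the hypothesis in a punctured neighborhood. Instead it applies Theorem~\ref{thm:connection} once, at the current state, to obtain $\rho(\mathcal R^t)<1$, and then uses the $SIR$-specific fact that each $s_i(t)$ is monotonically non-increasing: since $\mathcal R^t=\diag(s(t))\mathcal R^0$, the entries of $\mathcal R^t$ are non-increasing in $t$, so by Perron--Frobenius monotonicity $\rho(\mathcal R^t)$ is non-increasing and therefore stays below $1$ for all future times. Lemma~\ref{lem:SIR} then gives that $w^\top x$ is strictly decreasing thereafter, from which convergence to (and local stability of) the healthy equilibria is read off. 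This bypasses entirely the issue of $I_{ij}=x_j/x_i$ being undefined at $x=\0$.

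By contrast, your ``punctured neighborhood plus continuity of $\rho$'' argument is genuinely problematic, not just delicate: in any punctured neighborhood of $x=\0$ the ratios $x_j/x_i$ range over all of $(0,\infty)$, so one cannot have $\bar R^t_i<1$ for \emph{all} states there unless every $\beta_{ij}$ with $j\neq i$ vanishes. The hypothesis should be read as holding at a single state along the trajectory, and the monotonicity of $s$ is what propagates it; once you use that, the linearization and center-manifold machinery become unnecessary.
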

\begin{proof}
If $\bar{R}^t_i<1$ for all $i \in \underline{n}$, then
based on Theorem~\ref{thm:connection} we have $\rho(\mathcal{R}^t)<1$. Further, for 
the networked $SIR$ model, each diagonal entry of $\diag(s(t))$ is monotonically decreasing,
which indicates that $\rho(\mathcal{R}^t)$ is monotonically decreasing (based on \cite[ Thm. 2.7, and Lemma 2.4]{varga2009matrix_book}). 
Based on Lemma~\ref{lem:SIR}, $\rho(\mathcal{R}^t)<1$ will cause the infected population in each community to converge to the healthy 
equilibria, and the healthy equilibria are stable. 
Thus, we have established local stability. 

For the networked $SIS$ model, if $\bar{R}^t_i>1$ for all 
$i\in \underline{n}$, based on Theorem~\ref{thm:connection}, we have $\rho(\mathcal{R}^t)>1$. Further, based on \cite[ Thm. 2.7, and Lemma 2.4]{varga2009matrix_book} and $s(t)<\textbf{1}$, we have $\rho(\mathcal{R}^0)>\rho(\mathcal{R}^t)>1$. Hence, based on Lemma~\ref{lem:SIS}, the networked $SIS$ model will have a unique endemic equilibrium, which is stable, under the condition that $\bar{R}^t_i>1$ for all $i\in \underline{n}$. Thus, we have completed the proof.
\end{proof}
\begin{remark}
Corollary~\ref{cor:SIS_SIR} provides a new way to analyze the spreading behavior of the classic 
$SIS$ and $SIR$ models, i.e., through the distributed reproduction numbers. For networked $SIR$ models, if the reproduction number of every community in the network is less than~$1$, then we can ensure the epidemic is fading away. For networked $SIS$ models, if the effective reproduction number of each community in the network is greater than~$1$, then there must be an endemic in the future.
\end{remark}

Lemmas~\ref{lem:SIS} and~\ref{lem:SIR} indicate that if $R^0<1$, then the weighted sum of the infected states will converge to zero. However, 
unlike the fact that
$R^t\leq R^0$ for network-level reproduction numbers, 
if the basic reproduction number of community $i$ is less than~$1$ for all $i\in \underline{n}$, then the effective reproduction number of 
community $i$ can still be greater than~$1$. Hence, we have the following corollary.

\begin{corollary}
\label{cor：outbreak}
There can be an outbreak within community $i$ even under the condition that the basic reproduction number of the community is smaller than~$1$, 
i.e., $R^0_i<1$ for all $i \in \underline{n}$.
\end{corollary}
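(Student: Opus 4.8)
The statement is an existence claim, so the plan is to exhibit one strongly connected network together with a model state $(s,x)$ at which $R^0_i<1$ for \emph{every} $i\in\underline{n}$, yet the infected proportion of some community is strictly increasing; by Theorem~\ref{thm:net_r} this happens precisely when that community's effective reproduction number $\bar R^t_i$ exceeds~$1$, which is what ``outbreak within community $i$'' means. First I would juxtapose the two quantities from Definitions~\ref{Def:NRN}--\ref{def:R_c}: $R^0_i=\sum_{j}\beta_{ij}/\gamma_i$ versus $\bar R^t_i=\sum_{j} s_i\beta_{ij}x_j/(\gamma_i x_i)$. The only differences are the factor $s_i\le 1$ (which can only \emph{decrease} things) and the infection ratios $I_{ij}=x_j/x_i$ (which can be made arbitrarily large when community~$i$ is much less infected than a neighbor~$j$). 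Thus the mechanism to exploit is a small but positive seed $x_i$ sitting next to a heavily infected neighbor.

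Concretely I would take $n=2$ with $\beta_{11}=\beta_{22}=0$ (or arbitrarily small positive values, if self-loops are desired), cross-transmission $\beta_{12}=\beta_{21}=\beta$, and $\gamma_1=\gamma_2=\gamma$, choosing $\beta<\gamma$ so that $R^0_1=R^0_2=\beta/\gamma<1$. The graph is strongly connected since $\beta_{12},\beta_{21}>0$. Then I would pick a state with $s_1$ close to~$1$, $x_2$ a fixed moderate value, and $x_1=\varepsilon x_2$ with $\varepsilon$ small enough that $s_1\beta x_2/(\gamma x_1)=s_1\beta/(\gamma\varepsilon)>1$. For this state $\bar R^t_1 = s_1\beta_{12}x_2/(\gamma_1 x_1) > 1$, so Theorem~\ref{thm:net_r} gives $dx_1/dt>0$: community~$1$ is undergoing an outbreak while $R^0_1<1$ and $R^0_2<1$. (If one wants the outbreak to occur in \emph{both} communities simultaneously, the same construction applied symmetrically with small seeds in both gives $\bar R^t_1,\bar R^t_2>1$, which by Corollary~\ref{cor:SIS_SIR} even forces an endemic equilibrium in the $SIS$ case.)

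The step needing the most care is verifying admissibility of the constructed state and arguing that the strict increase is a genuine outbreak rather than an instantaneous artifact. For admissibility I would check that the chosen proportions are nonnegative and sum to at most one in each community (easy: $s_1+x_1\le 1$ with $x_1=\varepsilon x_2$ small, and $s_2+x_2+r_2\le1$ by choosing $r_2$, $s_2$ appropriately); for the $SIR$ case one also allocates the recovered fraction $r_i$. To see persistence, I would note that $\bar R^t_1>1$ holds on an open neighborhood of the chosen state, so $dx_1/dt>0$ on a nondegenerate time interval and $x_1$ grows appreciably before the feedback through $s_1$ and $I_{12}$ can reverse it. Everything else -- strong connectivity and $R^0_i<1$ -- is immediate from the explicit numbers, so no substantive obstacle remains.
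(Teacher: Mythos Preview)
Your main argument is correct and follows the same mechanism as the paper's proof: both pinpoint that the infection ratios $I_{ij}=x_j/x_i$ can inflate $\bar R^t_i$ above~$1$ even when $R^0_i=\sum_j R^0_{ij}<1$, and then invoke Theorem~\ref{thm:net_r} to conclude that $x_i$ increases. The paper's own proof stops at that abstract observation (two sentences: $R^0_i<1$ does not force $\bar R^t_i<1$; if $\bar R^t_i>1$ then Theorem~\ref{thm:net_r} gives an outbreak). You go further and build an explicit two-node witness, which is more satisfying as an existence argument and costs nothing extra; the admissibility and persistence remarks are welcome polish that the paper omits.

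One caveat: your parenthetical claim that the symmetric construction yields $\bar R^t_1>1$ and $\bar R^t_2>1$ \emph{simultaneously} is false in the two-node example. With $\beta_{11}=\beta_{22}=0$ and $\beta_{12}=\beta_{21}=\beta<\gamma$, one has
\[
\bar R^t_1\,\bar R^t_2 \;=\; \frac{s_1\beta x_2}{\gamma x_1}\cdot\frac{s_2\beta x_1}{\gamma x_2}\;=\; s_1 s_2\Bigl(\frac{\beta}{\gamma}\Bigr)^{2}<1,
\]
so at most one of them can exceed~$1$ at any state. This does not affect the corollary (which only asserts an outbreak in \emph{some} community~$i$), but you should drop or repair that aside; a simultaneous outbreak would require a larger network or asymmetric parameters.
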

\begin{proof}
Using $R^0_i=\sum_{j=1}^{n} R^0_{ij}<1$ for all $i,j\in \underline{n}$, we cannot guarantee that 
$\bar{R}^t_i=\sum_{j=1}^{n} s_iR^0_{ij}I_{ij}<1$ for all $i,j\in \underline{n}$.
Hence, if $R^0_i<1$ but $\bar{R}^t_i>1$,  based on Theorem~\ref{thm:net_r}, the infected proportion $x_i$ will increase, which will cause an outbreak. Hence, we complete the proof.
\end{proof}

After demonstrating that we can use the distributed reproduction numbers to analyze spreading behaviors, we showed that the  distributed 
reproduction numbers are closely related to the basic and effective reproduction numbers of the network in Theorem~\ref{thm:connection}, Corollary~\ref{cor:SIS_SIR}, and Corollary~\ref{cor：outbreak}. Hence we have answered Question~\ref{prob:2} that not only can we use the distributed reproduction numbers to analyze spreading behaviors of individual entities within the network, but also we can use the distributed reproduction numbers to study the overall spreading behavior of the network as a whole.

\vspace{-2ex}
\section{Applications} \label{sec:simulation}
In this section, we use two examples to illustrate the importance of leveraging distributed reproduction numbers to study 
epidemic spread across entire networks and within each entity in a network. 
In the first example, we show the advantage of leveraging $\bar{R}^t_i$ for all $i \in\underline{n}$ instead of $R^t$ in analyzing networked spreading processes. 
In the second example, we illustrate the potential of leveraging distributed reproduction numbers in data-driven applications.
Together, these examples answer Question~3 from Section~II. 
\vspace{-4ex}
\subsection{Using Distributed Effective Reproduction Numbers}
Consider an epidemic spreading over ten strongly connected communities, as shown in Fig.~\ref{fig_ten}. Suppose the epidemic spreads based on the classic networked $SIR$ models in \eqref{eq:SIR}. 
We capture the spreading behavior in the plots in Fig.~\ref{fig:simulation}. 
Fig.~\ref{fig:simulation}~(Top) shows that the effective reproduction number of the whole network ($R^t$) is always less than~$1$. 
Thus, if a community uses 
this $R^t$ for policy-making and forecasting, then the community might believe that the 
infected proportion of the population will decrease from time step zero. 

However, there 
are still outbreaks over several communities, e.g., communities 3 and 5 in 
Fig.~\ref{fig:simulation}~(Bottom), where the infected proportions in fact increase 
for several timesteps at the 
beginning. These outbreaks can be explained through the distributed reproduction numbers of the 
communities, which are plotted in Fig.~\ref{fig:simulation}~(Middle). Through 
analyzing Fig.~\ref{fig:simulation}~(Middle), $\bar{R}^t_3$ and $\bar{R}^t_5$ are 
greater than~$1$ at the beginning, which indicate outbreaks within them
and thus that both communities should take actions against 
potential outbreaks. This simple case demonstrates that it is more informative to leverage 
distributed reproduction numbers when designing mitigation policies for individual community. %
\vspace{-3ex}
\begin{figure}[h!]
  \begin{center}
    \includegraphics[ trim = 8cm 7.2cm 8cm 6cm, clip, width=0.9\columnwidth]{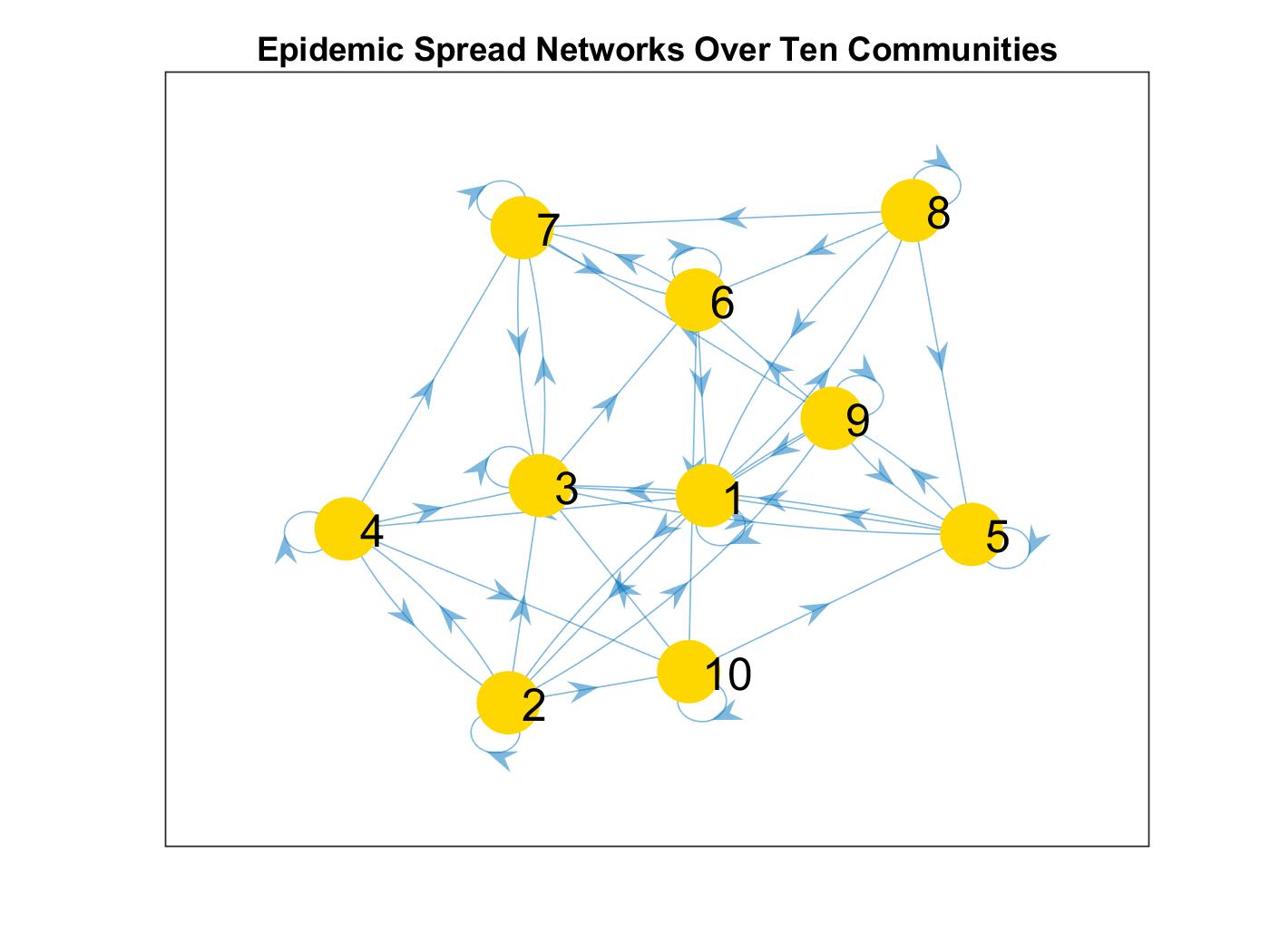}
  \end{center}
    \vspace{-2ex}
  \caption{We consider an $SIR$ model of an epidemic spreading over this strongly connected network of~$10$ communities.}
  \label{fig_ten}
    \vspace{-3ex}
\end{figure}
\begin{figure}
  \begin{center}
    \includegraphics[ trim = 0.3cm 0cm 0cm 0.1cm, clip, width=0.9\columnwidth]{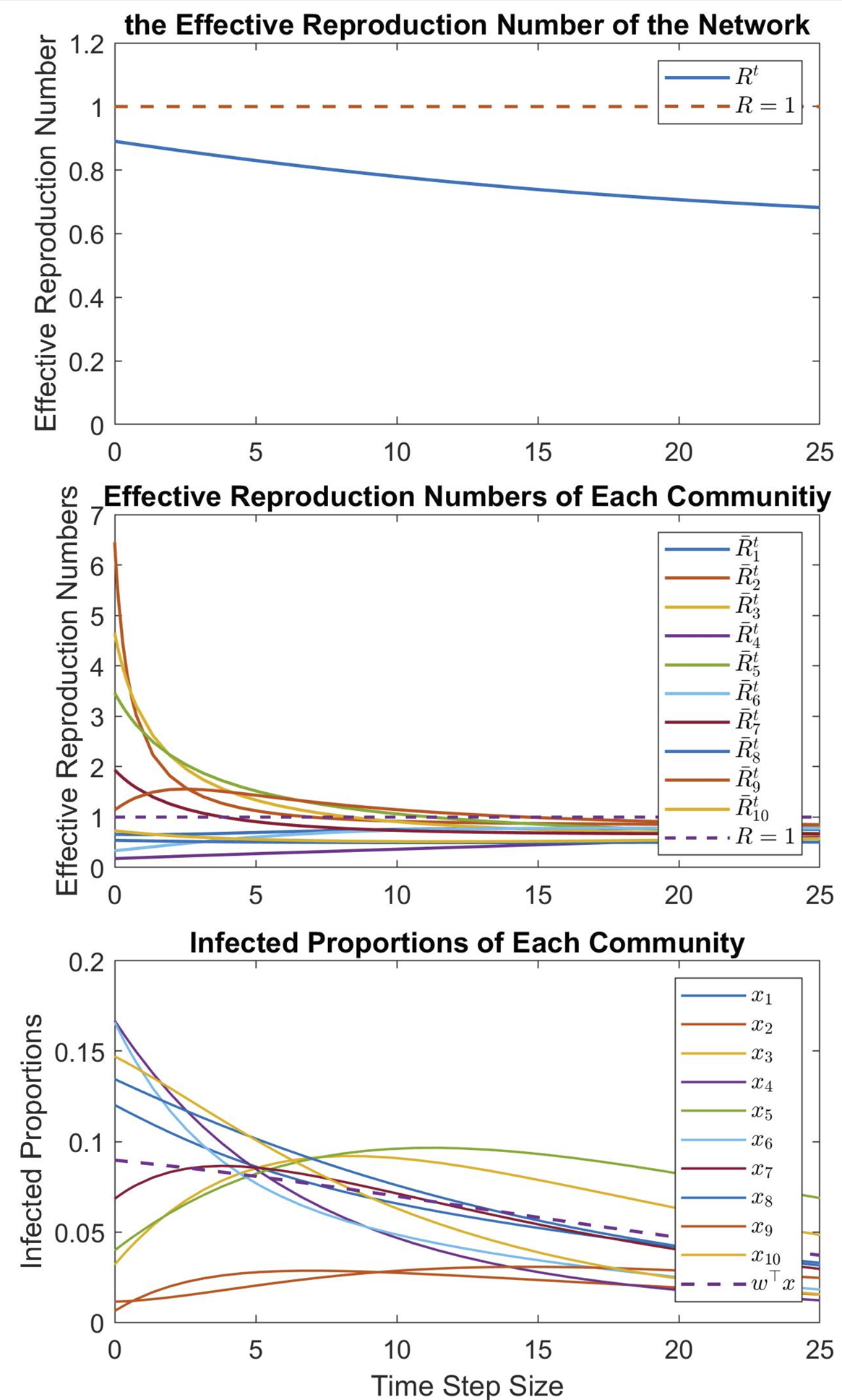}
  \end{center}
\vspace{-2ex}
  \caption{
(Top) The effective reproduction number of the whole spread network. (Middle) The effective reproduction number of each community in the network, (Bottom) The infected proportion of each community. Note that the dashed line is the weighted sum of the infected proportions, and the trend of weighted sum is captured by $R^t$ in the top figure.} 
  \label{fig:simulation}
    \vspace{-3ex}
\end{figure}
\vspace{-2ex}
\subsection{Data-Driven Distributed Reproduction Numbers}
In the second example, we show how we can calculate distributed reproduction numbers from synthetic data to analyze spreading 
over a network, which completes our answer to Question~\ref{prob:3} from Section~II. We construct a piece-wise time-varying stochastic $SIR$ spreading process over the strongly connected 3-node network shown in Fig.~$\ref{fig_3_node}$. 

Suppose each community has a population of~$20,000$. 
We set the initial infected population of communities 1, 2, and 3 as
12, 3, and 23, respectively.
We assume the spread occurs from $2020-03-03$ to $2020-06-02$. For the daily infected cases, we assume that we know the total infected cases and the sources which cause these infections (e.g., new infections caused within a community or caused by neighbors). We use the method proposed by \cite{cori2013new} to estimate the distributed effective reproduction numbers defined in this work, where a statistical inference approach is used and the estimation only relies on daily infected cases.
\vspace{-2ex}
\begin{figure}[h]
  \begin{center}
    \includegraphics[ trim = 7cm 7cm 7cm 4cm, clip, width=0.4\columnwidth]{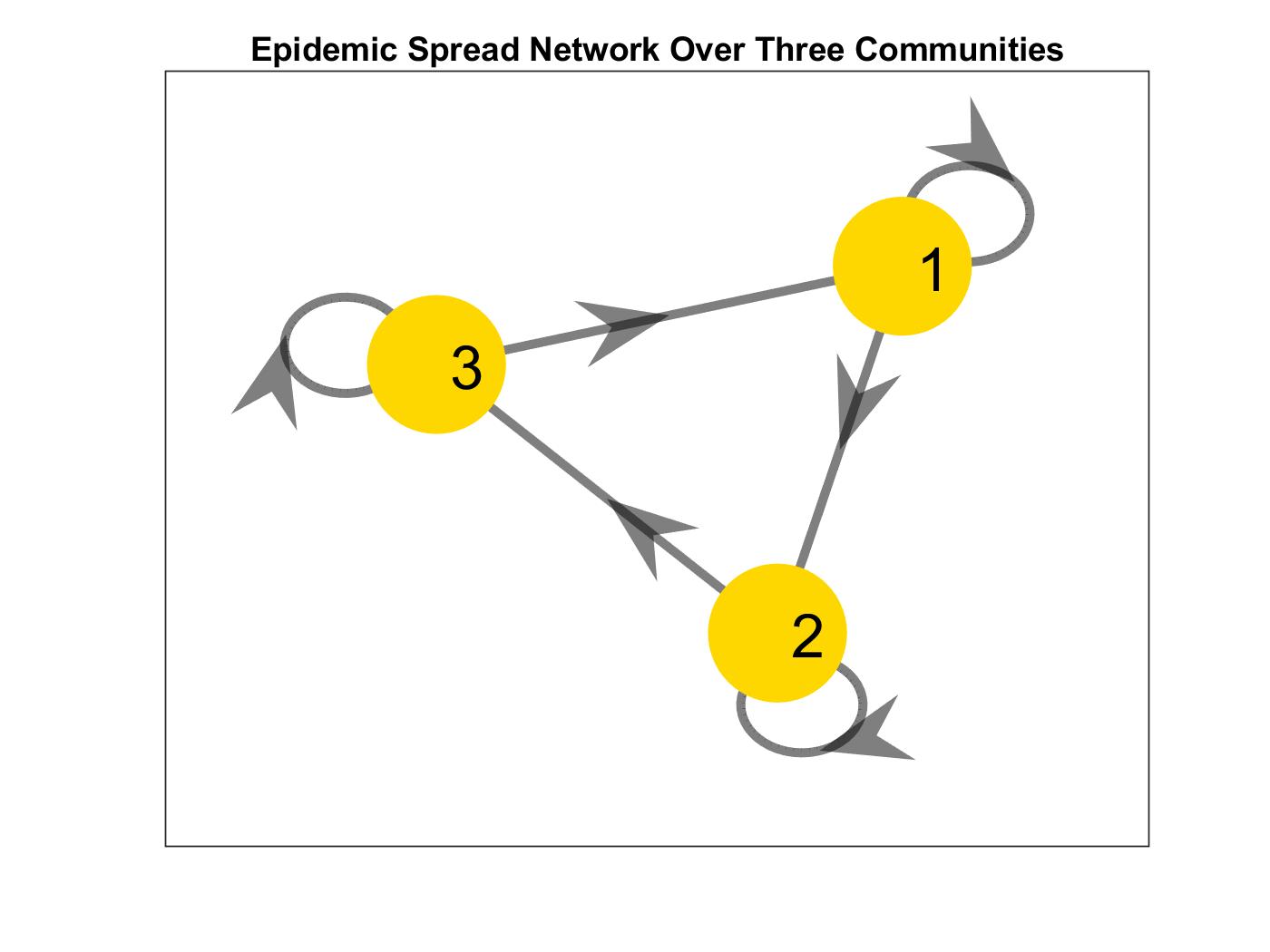}
  \end{center}
    \vspace{-2ex}
  \caption{We consider epidemic spread over the three strongly connected communities shown here.}
  \label{fig_3_node}
\vspace{-1ex}
\end{figure}

Fig.~\ref{fig_R_total}~(Top) shows the total daily infected cases of the three connected communities. We estimated the effective reproduction number of the network $R^t$, 
given in Fig.~\ref{fig_R_total}~(Bottom). The estimated $R^t$ can illustrate that the total infected cases start to decrease around $2020-04-10$, since $R^t<1$ after that date. However, the estimated $R^t$ in Fig.~\ref{fig_R_total} cannot provide any further information about the spreading behaviors of the individual communities. 
\vspace{-2ex}
\begin{figure}[h]
  \begin{center}
    \includegraphics[ trim = 0.0cm 0.cm 0.12cm 0cm, clip, width=\columnwidth]{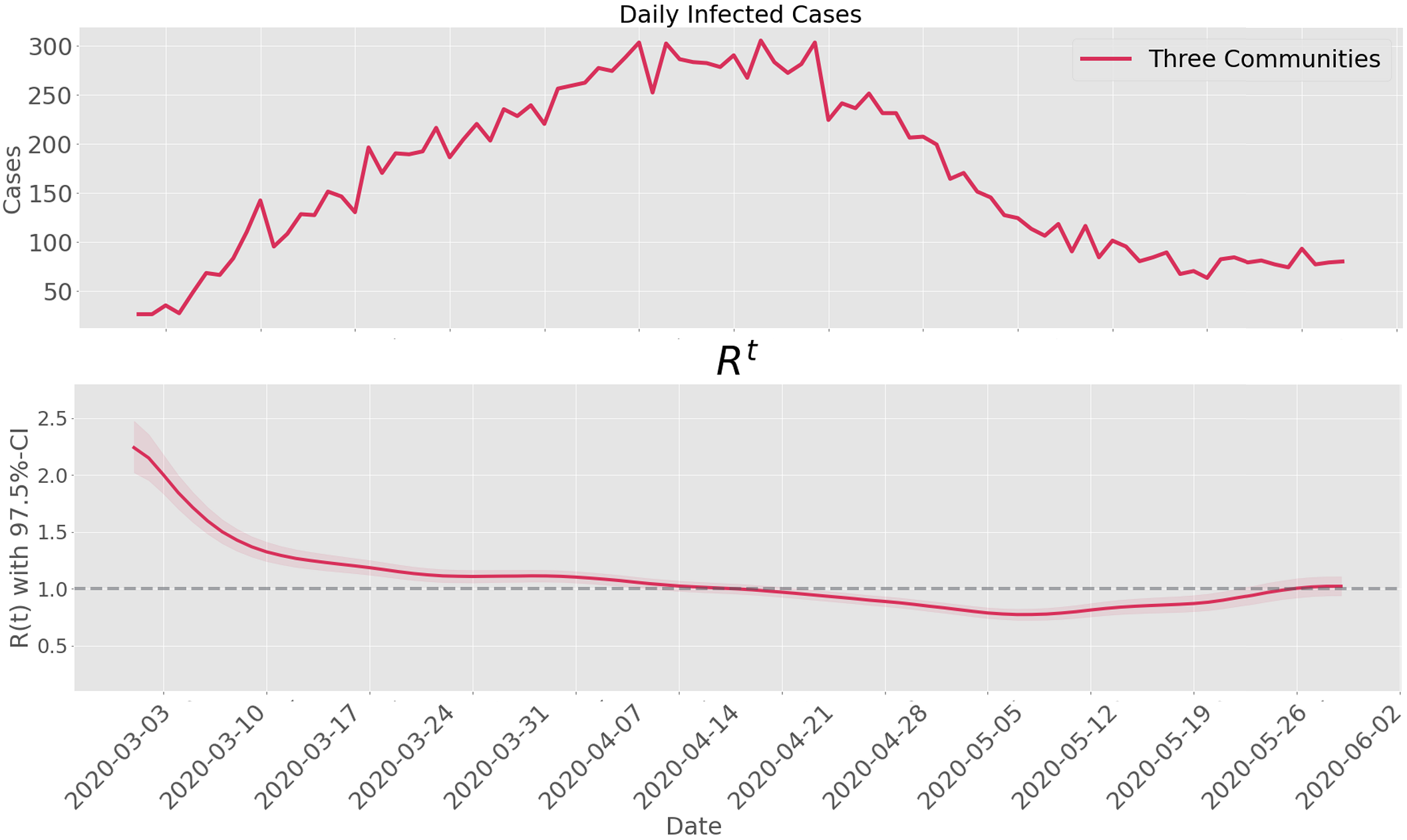}
  \end{center}
    \vspace{-2ex}
  \caption{(Top) Total infected cases over the spreading network, (Bottom) Estimated effective reproduction number~$R^t$.}
  \label{fig_R_total}
   \vspace{-1ex}
\end{figure}

In order to study the spreading behavior of each community, we leverage the daily infected cases of each community to estimate the 
effective
reproduction number of each community, denoted $\bar{R}^t_i$ for $i\in \underline{3}$. We plot the  infected cases of each community, shown in Fig.~\ref{fig_3} (Top). Further, we plot the estimated effective reproduction numbers of communities 1, 2, and 3 ($\bar{R}^t_1$, $\bar{R}^t_2$, and $\bar{R}^t_3$) in Fig.~\ref{fig_3} (Bottom three). As shown by $\bar{R}^t_1\approx1$ between $2020-03-24$ to $2020-04-21$, 
the daily infected cases of community~$1$ remain unchanged. Meanwhile, according to $\bar{R}^t_3$ from $2020-03-17$ to $2020-03-24$, it is possible for the outbreak in community 3 to become worse, since 
$\bar{R}^t_3>1$ and $\bar{R}^t_3$ is increasing. Hence, compared to $R^t$ in Fig.~\ref{fig_R_total}, $\bar{R}^t_i$ for $i\in \underline{3}$ can give more specific insights about individual communities.
\begin{figure}[h]
  \begin{center}
    \includegraphics[trim = 0cm 0cm 0.08cm 0cm, clip, width=\columnwidth]{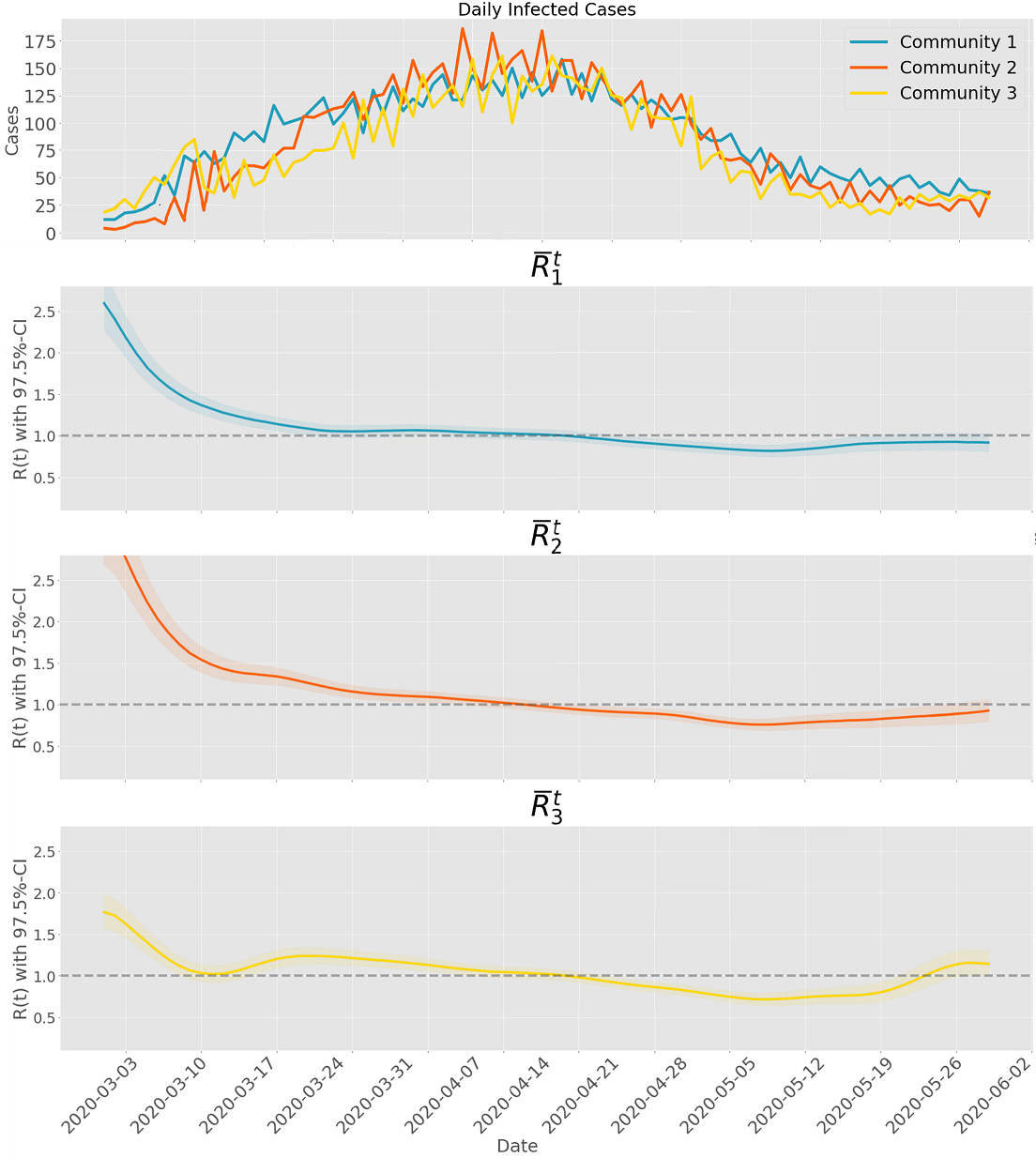}
  \end{center}
    \vspace{-2ex}
  \caption{ (Top) Infected cases of each community within the spreading network, (Top Middle) Estimated effective reproduction number of Community 1, (Bottom Middle) Estimated effective reproduction number of Community 2, (Bottom) Estimated effective reproduction number of Community 3.}
  \label{fig_3}
    \vspace{-3ex}
\end{figure}

Last, given the assumption that we know the cause of the infections of the communities (either generated by the infected 
cases within the community or the infected cases from its neighbors), we can leverage the infected cases generated by
the neighbors (Fig.~\ref{fig_R_ij}~(Top)) to estimate the distributed pseudo-effective reproduction numbers $R^t_{ij}$. 
As shown in Fig.~\ref{fig_R_ij}~(Bottom Middle), through observing the distributed reproduction number $R^t_{21}$,
we find that $R^t_{21}\approx1$ from $2020-03-31$ to $2020-05-05$. This observation indicates that the daily infections
in community $2$ that are introduced from the infected population in community $1$ remain unchanged. 
Thus, if we want to further decrease the infections in community $2$ generated 
by community $1$, we 
can implement 
a mitigation strategy that restricts interactions
between these two communities and monitor $R^t_{21}$ as an indicator of the 
effectiveness of the
strategy. 
For further indications, according to $R^t_{13}$ in Fig.~\ref{fig_R_ij}~(Top Middle), 
$R^t_{13}>1$ from $2020-03-10$ to $2020-03-17$, 
then $R^t_{13}<1$ from $2020-03-17$ to $2020-03-24$. One can infer 
that the interaction from communities $3$ to $1$ may vary drastically from $2020-03-10$ to $2020-03-24$, due to the implementation of mitigation policies between these communities, e.g. traffic restrictions from $2020-03-17$ to $2020-03-24$.
\begin{figure}[h]
  \begin{center}
    \includegraphics[trim = 0.0cm 0.0cm 0cm 0cm, clip, width=\columnwidth]{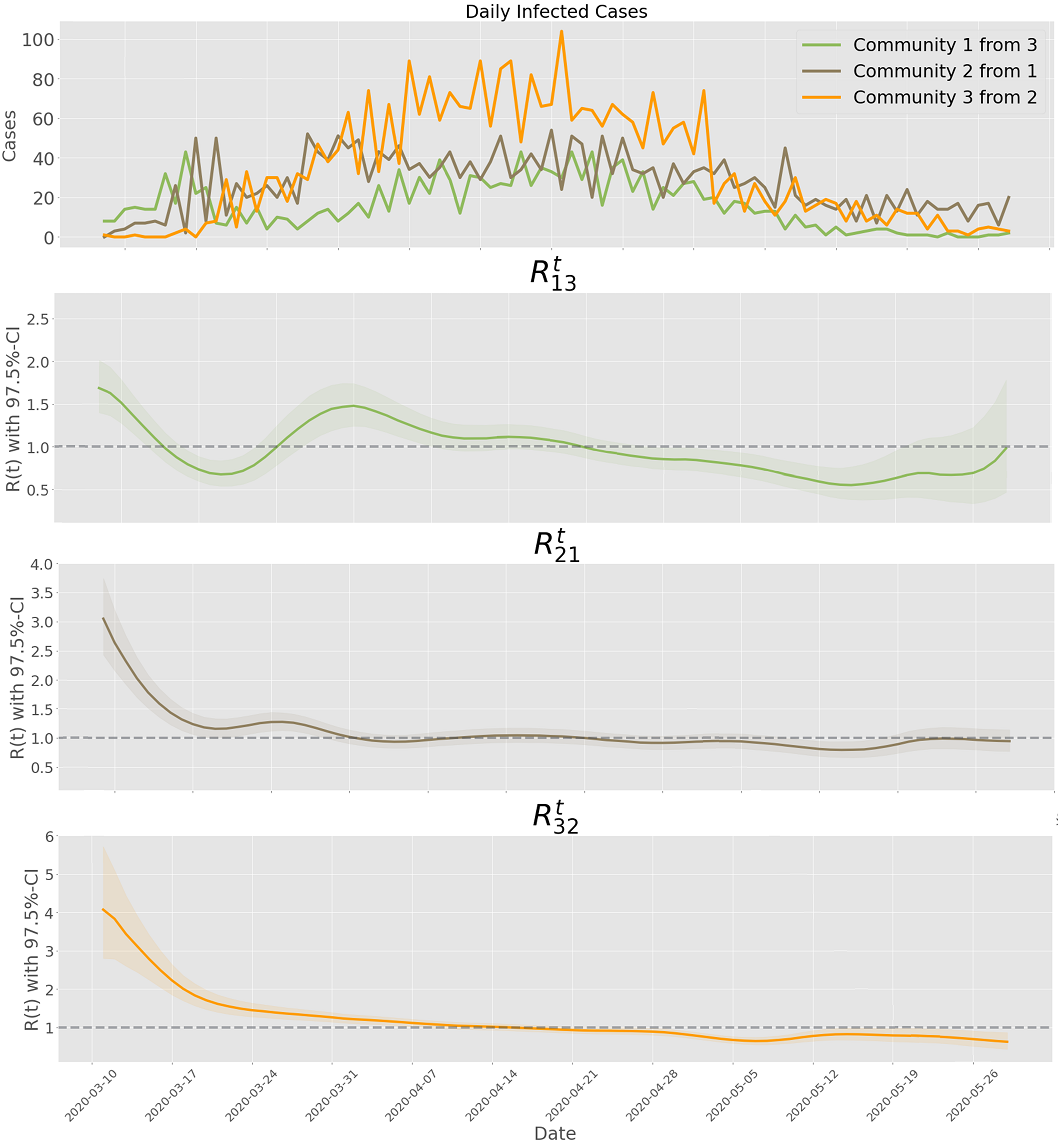}
  \end{center}
    \vspace{-2ex}
  \caption{(Top) Infected cases of each community generated by its neighbors, (Top Middle) Estimated distributed pseudo-effective reproduction number $R^t_{13}$, (Bottom Middle) Estimated distributed pseudo-effective reproduction number $R^t_{21}$, (Bottom) Estimated distributed pseudo-effective reproduction number $R^t_{32}$.}
  \label{fig_R_ij}
    \vspace{-3ex}
\end{figure}
\vspace{-2ex}
\section{Conclusion and Future Work}
In this work, we defined distributed reproduction numbers to study spreading behaviors of networked epidemic models. We bridged the gap between 
these distributed reproduction numbers and the conventional reproduction numbers of networks. In addition, we
demonstrated that distributed reproduction numbers can capture both spreading behaviors of individual communities
and networks as a whole.
Compared to network-level reproduction numbers, we show that distributed reproduction numbers estimated from synthetic data can infer 
much more information about the spread of an epidemic.
Future work will consider how we can leverage real-world networked testing and contact tracing data to estimate distributed reproduction numbers and how we can design distributed interventions by using these estimated distributed reproduction numbers.
\vspace{-2ex}

\normalem
\bibliographystyle{IEEEtran}
\bibliography{IEEEabrv,main}


\end{document}